\definecolor{redish}{rgb}{0.9, 0.17, 0.31}
\definecolor{fuchs}{rgb}{0.57, 0.36, 0.51}
\theoremstyle{definition}
\newtheorem{theorem}{Theorem}
\newtheorem{proposition}{Proposition}
\newtheorem{definition}{Definition}
\newtheorem{example}{Example}
\newtheorem{notation}{Notation}
\newtheorem{remark}{Remark}
\newtheorem{question}{Question}
\newtheorem{lemma}{Lemma}
\newcommand{\F}{\mathbb{F}}
\newcommand{\mC}{\mathcal{C}}
\newcommand{\mA}{\mathcal{A}}
\newcommand{\supp}{\operatorname{supp}}
\newtheorem{claim}{Claim}
\def\BibTeX{{\rm B\kern-.05em{\sc i\kern-.025em b}\kern-.08em
    T\kern-.1667em\lower.7ex\hbox{E}\kern-.125emX}}
\newcommand{\bX}{{\bf X}}
\newcommand{\balpha}{{\boldsymbol\alpha}}
\newcommand{\bc}{{\bf c}}
\newcommand{\bs}{{\bf s}}
\newcommand{\bzero}{{\bf 0}}
\newcommand{\Fq}{\mathbb{F}_q}
\newcommand\ed[2]{\textup{ED}(#1,#2)}
\newcommand\RSS{\text{RS}_{n,k}(\balpha)}
\newcommand{\RSfull}{\text{RS}_{q,k}(\alpha_1, \ldots, \alpha_q)}
\title[RS Codes Against Insertions and Deletions: Full-Length and Rate-$1/2$ Codes]{Reed-Solomon Codes Against Insertions and Deletions: Full-Length and Rate-$1/2$ Codes}
\author{Peter Beelen$^1$}
\address{$^1$Technical University of Denmark, Denmark.}
\thanks{R. C. and E. Y. are supported in part by the Israel Science Foundation (ISF) grant 2462/24 and are supported by the European Union (DiDAX, 101115134).
Views and opinions expressed are those of the author(s) only and do not necessarily reflect those of the European Union or the European Research Council Executive Agency. Neither the European Union nor the granting authority can be held responsible for them. P. B., A. G. and M. M. are supported by the Villum Fonden through grant VIL”52303”.}
\email{\{pabe,anigr,marimo\}@dtu.dk, roni.con93@gmail.com, yaakobi@cs.technion.ac.il}
\author{Roni Con$^2$}
\address{$^2$Technion -- Israel Institute of Technology, Israel.}
\author{Anina Gruica$^1$}
\author{Maria Montanucci$^1$}
\author{Eitan Yaakobi$^2$}
\begin{document}
\maketitle

\begin{abstract}
    The performance of Reed--Solomon codes (RS codes, for short) in the presence of insertion and deletion errors has attracted growing attention in recent literature. In this work, we further study this intriguing mathematical problem, focusing on two regimes. First, we study the question of how well \emph{full-length} RS codes perform against insertions and deletions. 
    For 2-dimensional RS codes, we provide a complete characterization of codes that cannot correct even a single insertion or deletion. Furthermore, we prove that for sufficiently large field size~$q$, nearly all full-length $2$-dimensional RS codes can correct up to $(1 - \delta)q$ insertion and deletion errors for any $0 < \delta < 1$. Extending beyond the 2-dimensional case, we show that for any $k \ge 2$, there exists a full-length $k$-dimensional RS code capable of correcting $q / (10k)$ insertion and deletion errors, provided $q$ is large enough. Second, we focus on rate $1/2$ RS codes that can correct a single insertion or deletion error. We present a polynomial-time algorithm that constructs such codes over fields of size $q = \Theta(k^4)$. This result matches the existential bound given in \cite{con2023reed}.
\end{abstract}

\bigskip
\bigskip
\renewcommand{\baselinestretch}{1.05}\normalsize

\tableofcontents

\newpage
\section{Introduction}
This work considers the model of insertions and deletions (insdel errors, for short), the most common model of synchronization errors. An insertion error occurs when a new symbol is inserted between two adjacent symbols of the transmitted word, and a deletion is when a symbol is removed from the transmitted word. 
Note that these types of errors, unlike substitutions or erasures, can change the length of the transmitted message. 
Insdel errors cause loss of synchronization between the sender and the receiver, which makes the task of designing codes (over small alphabets) for this model a very tantalizing question.

This natural theoretical model, together with possible application in various fields, including in DNA-based storage systems, has led many researchers to construct and study codes against insdel errors (most of the results can be found in the following excellent surveys \cite{cheraghchi2020overview,haeupler2021synchronization}). 


In this work, we focus on one of the most well-known family of linear codes, called Reed--Solomon codes (RS codes), which are defined as follows.
\begin{definition}[Reed--Solomon code]
    Let $\alpha_1, \alpha_2, \ldots, \alpha_n$ be distinct elements in the finite field $\mathbb{F}_q$ with $q$ elements. For $k<n$, the $[n,k]_q$ \emph{Reed--Solomon} (\textit{RS}) code of dimension~$k$ and block length~$n$ associated with the evaluation vector $\balpha = (\alpha_1, \ldots, \alpha_n)\in\F_q^n$ is defined as the $\F_q$-linear space
    \[
    \RSS := \left \lbrace \left( f(\alpha_1), \ldots, f(\alpha_n) \right) : f\in \mathbb{F}_q[x]_{<k} \right \rbrace \subseteq \F_q^n.
    \]	
\end{definition}
RS codes play an important role in ensuring data integrity across various media. Several applications include QR codes, distributed data storage, and data transmission over noisy channels. Furthermore, RS codes have found many theoretical applications in cryptography and theoretical computer science. 
The appeal of RS codes is due to their simple algebraic structure, which provides efficient encoding and decoding algorithms, and also shows that they have optimal rate-error-correction trade-off in the \emph{Hamming} metric. Thus, it is natural to ask how well RS codes perform against insdel errors. This has been studied in several papers \cite{safavi2002traitor,wang2004deletion,tonien2007construction,duc2019explicit,liu20212,con2023reed,con2024optimal,liu2024optimal,con2024random}. However, many unsolved questions still remain and there is much to discover. 

In this paper, we study two regimes of RS codes.
First, we focus on \emph{full-length} RS codes, that is, $q = n$ and the evaluation vector $\balpha = (\alpha_1, \ldots, \alpha_n)\in\F_q^q$ corresponds to some permutation of the elements of $\F_q$. Contrary to RS codes against classical errors (i.e., substitution errors), permuting coordinates of an RS code can significantly reduce (or increase) the number of insertion or deletion errors it can correct. 

We fully characterize ``bad'' permutations on the elements of $\Fq$ for which the corresponding $2$-dimensional RS codes cannot correct any deletion or insertion.
Furthermore, we prove that for sufficiently large field size~$q$, \emph{almost all} full-length 2-dimensional RS codes can correct up to $(1 - \delta)q$ insertion and deletion errors for any constant $0 < \delta < 1$. In the more general $k$-dimensional setting, we show that for $q = 2^{O(k)}$, there exists an $[q,k]_q$ RS code that can correct $q/(10k)$ insdel errors. 

Second, we focus on rate $1/2$ RS codes. We give an algorithm that runs in polynomial time in~$k$, that constructs a $[2k,k]_q$ RS code that corrects a single insdel error, where $q = O(k^4)$. We note that the explicit constructions in \cite{con2023reed,liu2024optimal} require a field of size $2^{k^k}$ and that our result matches the existential result from~\cite{con2023reed}.

\subsection{Previous Work}

\paragraph{\textit{Non-linear insdel codes}}
The study of codes that can correct adversarial insertions and deletions (synchronization errors) started with the seminal works of Levenshtein \cite{levenshtein1966binary}, who showed that the codes of Varshamov and Tenengolts \cite{varshamov1965codes} (correcting asymmetric errors) are optimal binary codes that can correct a single insdel error.
The quest for constructing (close to) optimal codes that can correct a \emph{constant} number of insertions or deletions (even just two) spans many works in recent years with some astonishing results \cite{gabrys2018codes,sima2019two,brakensiek2017efficient,sima2020optimal,sima-q2020optimal,guruswami2021explicit,liu2024explicit}. Despite all of this progress, the question of determining the optimal redundancy-error trade-off for codes correcting a constant number of deletions is still open. 
When it comes to correcting a \emph{fraction} of insdel errors, Haeupler and Shahrasbi \cite{haeupler2021synchronization-org} presented an efficient code with rate $1 - \delta - \varepsilon$ over an alphabet of size $O_{\varepsilon}(1)$ that can correct a $\delta$ fraction of insdel errors. 
These codes are optimal in the sense that they can get as close as we want to the \emph{Singleton bound}, which is $1 - \delta$. 
For \emph{binary} codes, the gap between the upper \cite{yasunaga2024improved} and lower bound \cite{levenshtein2002bounds} on the rate-error trade-off is huge. 


\paragraph{\textit{Linear insdel codes and RS codes}}
Reading the above paragraph, one might ask: ``How come no code among the referenced ones is a linear code?''.
The reason appears in \cite{abdel2007linear} where it was shown that \emph{any} linear code correcting a single insdel error must have rate at most $1/2$. This shows that linear codes are provably worse than non-linear codes as non-linear codes correcting a single insdel error can have rate $1- o(1)$. 
Then, in \cite{cheng2023efficient}, the authors generalized this result and proved the \emph{half-Singleton bound} which states that any $[n,k]_q$ linear code can correct at most $n - 2k + 1$ insdel errors. 
More upper bounds on special families of linear codes correcting insdel errors are given in \cite{chen2022coordinate,ji2023strict,xie2024new}. In particular, in \cite{ji2023strict} the authors show that if the all-$1$ codeword is contained in an $[n,k]$ linear code, then it can correct at most $n-2k$ insdel errors. 

In this work, we focus on RS codes which, by definition, require the alphabet size $q$ to be at least $n$. 
As far as we know, the performance of RS codes against insdel errors was first considered in \cite{safavi2002traitor} in the context of the traitor tracing problem. In \cite[Theorem 3.2]{wang2004deletion}, the authors constructed a $[5,2]_q$ \emph{generalized} RS code that can correct a single deletion when $q > 8$. They also showed how to extend this construction for any~$k$ (by induction), but only provided that there exists a choice of the evaluation points $\alpha_i$ and multipliers $v_i$ (which are the non-zero field elements used to scale the coordinates of the codewords) satisfying some specific properties \cite[Theorem 3.3]{wang2004deletion}. The resulting rate $1/2$ codes are never usual RS codes. This can be seen in \cite[Theorem 3.2]{wang2004deletion} where in fact the choice $v_i=1$ for all $i$ is not allowed. In \cite{tonien2007construction}, the authors constructed~$[n,k]_q$ RS codes (where $n$ can be as large as $q$) that can correct $\log_{k}(q) - 1$ insdel errors. However, their construction relies on the existence of a polynomial of degree~$k$ that satisfies special properties. The authors do not provide a method for constructing such a polynomial, nor do they show its existence (for general $k$ and $q$).

Then, in \cite{con2023reed}, the authors showed the existence of $[n,k]_q$ RS codes that can correct $n-2k+1$ insdel errors where $$q = O\left(\binom{n}{2k-1}^2 k^2\right).$$ These codes attain the half-Singleton bound with equality. They also provided an explicit construction of such codes where $q \approx n^{k^k}$. In \cite{con2024random}, it was shown that there are $[n,k]_q$ RS codes that can correct $(1-\varepsilon)n - 2k + 1$ insdel errors where $q = O(n + 2^{\textup{poly}(1/\varepsilon)}k)$.
In \cite{duc2019explicit,liu20212,con2023reed,con2024optimal}, the particular case of $k=2$ was studied and several explicit constructions were given. It was shown that the minimum field size of an $[n,2]_q$ correcting $n-3$ deletions is $\Omega(n^3)$ and it is accompanied by an explicit construction with $q=O(n^3)$ \cite{con2024optimal}.

\subsection{Our Contribution}
In this work, we consider two very basic questions regarding the performance of RS codes against insdel errors.
\begin{question}
    Can a full-length RS code correct insdel errors, and if it can, how many?
\end{question}
Consider the scenario where our RS code is defined over $\F_p$, for a prime $p$, and $\balpha = (0,1,2,\ldots,p-1)$. Then this $\RSS$ code over $\F_p$ cannot correct even a single deletion. 
In fact, if we remove the first symbol from the codeword that corresponds to $f(x) = x$ and the last symbol from the codeword that corresponds to $g(x) = x+1$, then we get the same vector of length $p-1$. 
However, what happens if we consider ``less natural'' orderings on the points? Can we choose a different permutation of $\F_p$ for which the corresponding 2-dimensional RS code can correct a single deletion -- or perhaps even more? 

We answer this question in the affirmative. 
First, we give a complete characterization of all orderings $\balpha= (\alpha_1,\ldots,\alpha_q)$ of $\Fq$ that give rise to 2-dimensional RS codes that cannot correct even a single deletion. The number of such bad orderings is tiny compared to the total number of possible orderings, $q!$. We show that, even for $q=7$, more than $95\%$ of all orderings yield a 2-dimensional RS code correcting $1$ insdel. 

Second, we show that, in fact, most $2$-dimensional RS codes can correct any linear fraction of insdel errors, provided that $q$ is large enough. Specifically, we show the following:

\begin{restatable}{theorem}{twodim} \label{thm:2-dim}
    Let $\varepsilon,\delta > 0$. Then, for every prime power $q>q_0(\delta,\varepsilon)$, 
    at least $(1-\varepsilon)$ fraction of all $\textup{RS}_{q,2}$ codes can correct $(1-\delta) q$ insdel errors. 
\end{restatable}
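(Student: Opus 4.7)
The plan is to reduce the insdel-correction capability of $\textup{RS}_{q,2}(\balpha)$ to a probabilistic statement about longest increasing subsequences (LIS) of certain permutations of $[q]$ induced by $\balpha$, and then conclude by a union bound. Recall that a length-$n$ code corrects $t$ insdel errors if and only if every pair of distinct codewords has longest common subsequence (LCS) of length strictly less than $n-t$; with $n=q$ and $t=(1-\delta)q$, I need every pair of distinct codewords of $\textup{RS}_{q,2}(\balpha)$ to have LCS less than $\delta q$. Without loss of generality I may take $\delta<1/2$, since the case $\delta\ge 1/2$ reduces to (and is easier than) the case $\delta=1/3$, as any code correcting more errors certainly corrects fewer.

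First I would classify pairs of codewords. Two distinct codewords are determined by $f(x)=a_1x+b_1$ and $g(x)=a_2x+b_2$ with $(a_1,b_1)\ne(a_2,b_2)$. If either slope vanishes, a direct check shows $\text{LCS}\le 1$, far below $\delta q$. In the remaining case $a_1,a_2\in\F_q^*$ define the affine bijection $\phi(x)=(a_1x+b_1-b_2)/a_2\ne\text{id}$. For each $i\in[q]$ there is a unique $j\in[q]$ with $f(\alpha_i)=g(\alpha_j)$, namely $j=\alpha^{-1}(\phi(\alpha(i)))$, which defines a permutation $\sigma_\phi:=\alpha^{-1}\circ\phi\circ\alpha\in S_q$. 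Because $f$ is injective, the codeword $c_f$ has all distinct entries, and a standard argument gives $\text{LCS}(c_f,c_g)=\text{LIS}(\sigma_\phi)$.

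Second, for each fixed $\phi\ne\text{id}$, I would bound $\Pr_{\balpha}[\text{LIS}(\sigma_\phi)\ge m]$ with $m=\lceil\delta q\rceil$ by a union bound over the $\binom{q}{m}^2$ choices of strictly-increasing index sequences $i_1<\cdots<i_m$ and $j_1<\cdots<j_m$:
\[
\Pr\bigl[\text{LIS}(\sigma_\phi)\ge m\bigr]\;\le\;\binom{q}{m}^{\!2}\cdot\max_{i_\bullet,j_\bullet}\Pr_{\balpha}\bigl[\alpha(j_k)=\phi(\alpha(i_k))\ \text{for all}\ k\bigr].
\]
For each index choice the $m$ functional constraints determine $\alpha(j_k)$ from $\alpha(i_k)$; counting the permutations $\alpha$ that satisfy them gives an upper bound of order $\frac{q!}{(q-m)!}\cdot(q-2m)!$, so each per-term probability is at most $(q-2m)!/(q-m)!$. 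A Stirling estimate yields a total bound of the form $(C_\delta/q)^{\delta q}$ with $C_\delta=e^2/(\delta^2(1-2\delta))$, which decays super-polynomially in $q$ for fixed $\delta\in(0,1/2)$.

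Finally I would union-bound over all at most $q^4$ pairs of codewords and conclude that, for $q>q_0(\delta,\varepsilon)$, the total failure probability is below~$\varepsilon$. I expect the main technical obstacle to be the per-term bound in the overlapping case $\{i_k\}\cap\{j_k\}\ne\emptyset$: the constraints then form chains (and possibly cycles) along the action of $\phi$, and the combinatorics a priori depend on the cycle structure of $\phi$ in $S_q$, which in turn depends on the characteristic of $\F_q$ and on the multiplicative order of $a_1/a_2$. The key point to verify carefully is that, uniformly across all such configurations, the $m$ functional constraints still reduce the number of valid $\alpha$'s by a factor of roughly $q^m$, so the estimate $(C_\delta/q)^{\delta q}$ survives uniformly in $\phi$ and the union bound closes.
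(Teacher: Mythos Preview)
Your approach is essentially the paper's: both count orderings $\balpha$ admitting increasing $I,J\in[q]^{\ell}$ (your $\ell$ is $m=\lceil\delta q\rceil$) and a non-identity affine map $\phi$ with $\phi(\balpha_I)=\balpha_J$, and close by a union bound. Your framing via $\mathrm{LIS}(\alpha^{-1}\circ\phi\circ\alpha)$ is a nice reinterpretation, but the underlying count is the same one the paper performs in Proposition~\ref{prop:2dimmm} (where the rank condition on $V_{2,\ell,I,J}$ unwinds to exactly your system $\phi(\alpha_{I_t})=\alpha_{J_t}$), and your final estimate $(C_\delta/q)^{\delta q}$ matches the paper's $q^2\bigl(4e^2/(\delta^2 q)\bigr)^{\delta q}$ in Claim~\ref{cl:2dimmm}.

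The obstacle you flag---the overlapping case $\{i_k\}\cap\{j_k\}\neq\emptyset$---is precisely where the paper spends its effort (Claim~\ref{cl:unique}), and the device you are missing is Lemma~\ref{lem:alg-cond}: since $k=2$, one may restrict at the outset to pairs with $d_H(I,J)\ge\ell-1$, so at most one index $t$ has $I_t=J_t$. In your constraint-graph language this means the only possible cycle is a single self-loop (forcing $\alpha_{I_t}$ to be the unique fixed point of $\phi$); everything else is a union of directed paths, so the dependence on the cycle structure of $\phi$ that worried you never materializes. The paper then shows that fixing $a,b$ together with the $s-\ell$ values $\alpha_i$, $i\in\supp(I)\setminus\supp(J)$, determines all remaining $\alpha_i$ on $S=\supp(I)\cup\supp(J)$---exactly the ``reduction by a factor of roughly $q^{\ell}$'' you anticipated, obtained uniformly in $\phi$. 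Stratifying by $s=|S|$ and summing gives the bound. In short, your plan is correct, and Lemma~\ref{lem:alg-cond} is the clean way to discharge the step you singled out.
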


In particular, our result implies that a uniformly random full-length 2-dimensional RS code will, with high probability, correct any number of insdel errors which is linear in~$q$.
Beyond the 2-dimensional setting, we show -- using a probabilistic argument -- that for large enough $q$, there exists an ordering of $\F_q$ such that the respective RS code can correct $q/(16k)$ insdel errors. Formally, we prove the following theorem.

\begin{restatable}{theorem}{existperm}\label{thm:main}
    Let $k$ be an integer and $q$ be a prime power such that $q \geq e^{6k}\cdot (10 e k^3)$. 
    There exists an evaluation vector $\balpha = (\alpha_1, \ldots, \alpha_{q})$ such that the code $\textup{RS}_{q,k}(\balpha)$ can correct any $q/(10k)$ insdel errors. 
\end{restatable}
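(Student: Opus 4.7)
The plan is to use the probabilistic method on a uniformly random evaluation vector $\balpha$ (equivalently, a uniformly random ordering of $\F_q$). A code corrects $t$ insdel errors if and only if every pair of distinct codewords has longest common subsequence at most $q - t - 1$, so setting $t := q/(10k)$ and $s := q - t$, it suffices to show that with positive probability, $\textup{LCS}(c_f, c_g) < s$ for every pair of distinct polynomials $f, g \in \F_q[x]_{<k}$. There are fewer than $q^{2k}$ such pairs, so by a union bound it suffices to show $\Pr_\balpha[\textup{LCS}(c_f, c_g) \geq s] < q^{-2k}$ for every fixed pair.

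For a fixed pair $(f, g)$ with $f \neq g$, define the \emph{matching set} $M := \{(\beta, \gamma) \in \F_q^2 : f(\beta) = g(\gamma)\}$. Since each fiber $f^{-1}(y)$ and $g^{-1}(y)$ has size at most $k-1$, one has $|M| = \sum_y |f^{-1}(y)|\,|g^{-1}(y)| \leq (k-1) \sum_y |g^{-1}(y)| = (k-1)q$. The event $\textup{LCS}(c_f, c_g) \geq s$ is equivalent to the existence of size-$s$ increasing index sets $I = \{i_1 < \cdots < i_s\}$ and $J = \{j_1 < \cdots < j_s\}$ in $[q]$ with $(\alpha_{i_l}, \alpha_{j_l}) \in M$ for every $l$. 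A second union bound over the $\binom{q}{t}^2$ choices of $(I, J)$ reduces the task to bounding, for each fixed $(I, J)$, the probability that a random permutation $\balpha$ satisfies these $s$ coupled matching constraints.

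For the per-$(I, J)$ probability, I would exploit the order-preserving bijection $\tau: I \to J$ sending $i_l \mapsto j_l$: the constraints read $(\alpha_i, \alpha_{\tau(i)}) \in M$ for each $i \in I$. Because $\tau$ is order-preserving, its functional graph on $I \cup J$ contains no non-trivial cycles, and hence consists only of fixed points and simple paths. Each fixed point $i = \tau(i)$ forces $\alpha_i$ to lie in the root set $Z$ of $f - g$, of size at most $k-1$, while a path of $\tau$-length $\ell$ corresponds to a length-$\ell$ walk in the relation $M$ and admits at most $q \cdot (k-1)^\ell$ value assignments (one start value times at most $k-1$ extensions per step, since every fiber has size at most $k-1$). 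Multiplying these counts over the path-plus-fixed-point decomposition and dividing by the number $q!/(q - |I \cup J|)!$ of injections of $I \cup J$ into $\F_q$ yields a bound of the form $\Pr_\balpha[\text{constraints}] \leq q^{s - |I \cap J|} (k-1)^s / (2s - |I \cap J|)!$. Combining with $\binom{q}{t}^2 \leq (10ek)^{2t}$, the $q^{2k}$ factor from the polynomial-pair union bound, and Stirling-type estimates, the total probability of a bad permutation is strictly less than $1$ whenever $q \geq e^{6k} \cdot 10ek^3$.

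The main obstacle is the per-$(I, J)$ probability estimate itself: one must carefully leverage the order-preserving structure of $\tau$ (to exclude non-trivial cycles and to bound the number of fixed points by $|Z| \leq k-1$), correctly account for the distinctness constraints imposed by $\balpha$ being a bijection (not merely a function), and then balance the interplay between the walk-count growth $q^{s - |I \cap J|}$ and the factorial denominator $(2s - |I \cap J|)!$ across the full admissible range of the overlap $|I \cap J| \in [q - 2t, \, s]$.
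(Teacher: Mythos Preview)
Your approach is correct in outline but takes a genuinely different route from the paper. The paper does not union-bound over polynomial pairs at all: it samples $\balpha$ uniformly from $\F_q^q$ (allowing repeated coordinates), invokes the algebraic rank criterion of Lemma~\ref{lem:alg-cond}, and for each fixed pair $(I,J)$ bounds $\Pr[\operatorname{rank}V_{k,\ell,I,J}(\balpha)<2k-1]$ by extracting roughly $\ell/(4k)$ many $(2k-1)\times(2k-1)$ minors depending on pairwise \emph{disjoint} sets of coordinates, applying Schwartz--Zippel to each (each determinant is a nonzero polynomial of degree $<k^2$), and multiplying the resulting $k^2/q$ factors by independence. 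Only at the very end does the paper condition on $\balpha$ being a permutation, which happens with probability at least $e^{-q}$. Your route is more combinatorial: you fix $(f,g)$, decompose the order-preserving bijection $\tau:I\to J$ into fixed points and simple paths (the observation that $\tau$ is order-preserving rules out non-trivial cycles), and count walks in the fiber relation $M=\{(\beta,\gamma):f(\beta)=g(\gamma)\}$. This costs you an extra factor $q^{2k}$ from the polynomial-pair union bound, but since $q\ge e^{6k}\cdot 10ek^3$ gives $((k-1)e/q)^s\le e^{-6ks}$ with $s\approx q$, the slack easily absorbs $q^{2k}\approx e^{O(k^2)}$, and the stated threshold is still attainable. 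One small gap to patch: your fiber bound $|g^{-1}(y)|\le k-1$ fails when $g$ is constant, so you must dispose of the case $\deg f=0$ or $\deg g=0$ separately (it is immediate, since then either $M=\emptyset$ or the constraints force $s$ distinct values of $\balpha$ into a set of size $\le k-1$). The paper's argument is shorter and avoids analyzing the structure of $\tau$, while yours is more elementary in that it requires no determinant identity and works directly with a random permutation rather than conditioning at the end.
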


The second question considers rate-$1/2$ Reed-Solomon codes that can correct a single deletion. Recall that any linear code correcting a single insdel error must have rate at most $1/2$ and for $k=1$ it is achieved by the trivial repetition code. For larger dimensions, it is not obvious which codes of rate $1/2$ can correct a single deletion.  
\begin{question}
    Construct RS codes of rate exactly $1/2$ that can correct a single insdel error. 
\end{question}

We start by presenting a $[4,2]_7$ RS code that can correct a single insdel error and show that $q=7$ is the minimal field size for such a code. Then, this code will be used for an inductive process in which we construct an $[2k,k]_q$ RS code correcting a single insdel error. Formally:
 
\begin{restatable}{theorem}{AlgHalfCode} \label{thm:alg}
    Let $q = O(k^4)$ be a prime power. There exists a polynomial time algorithm that outputs $\balpha = (\alpha_1, \ldots, \alpha_{2k}) \in \F_q^{2k}$ for which the respective $\RSS$ code can correct a single insdel error.
\end{restatable}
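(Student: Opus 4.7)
The plan is to prove Theorem~\ref{thm:alg} by induction on $k$, constructing the evaluation vector inside a single ambient field $\F_q$ of size $q = \Theta(k^4)$ fixed at the outset. For the base case $k=2$, I would exhibit an explicit evaluation vector $\balpha \in \F_7^4$ and verify directly -- e.g., by the generator-matrix characterization of \cite{con2023reed}, or by brute-force comparison of codewords -- that the resulting $\RSS$ code corrects one insdel error; minimality of $q=7$ follows from a finite search over candidate evaluation vectors in $\F_{q'}$ for $q' < 7$.

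For the inductive step, suppose an evaluation vector $\balpha = (\alpha_1, \ldots, \alpha_{2\ell}) \in \F_q^{2\ell}$ yields an $[2\ell,\ell]_q$ RS code correcting a single insdel error. I would seek two new points $\alpha_{2\ell+1}, \alpha_{2\ell+2} \in \F_q$ such that the extended $[2\ell+2, \ell+1]_q$ RS code retains the property. By the determinantal characterization, this requires a family of $O(\ell^2)$ conditions $D_{i,j}(\alpha_{2\ell+1}, \alpha_{2\ell+2}) \neq 0$, one per pair of positions at which two distinct codewords might coincide after a single deletion. Each $D_{i,j}$ is a polynomial of bounded degree (at most $\ell$ in each of the two new variables), and a Schwartz--Zippel union bound then shows that the set of \emph{bad} pairs $(x,y) \in \F_q^2$ has size $O(\ell^3 \cdot q)$. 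Choosing $q = \Theta(k^4)$ upfront ensures this bad set is strictly smaller than $q^2$ at every inductive level $\ell \leq k-1$, so a good pair always exists.

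The polynomial-time algorithm is then immediate: at each inductive step, enumerate over the $q^2 = O(k^8)$ candidate pairs; for each pair, evaluate the $O(\ell^2)$ determinants (each computable in $\operatorname{poly}(\ell)$ time); and keep the first pair that satisfies every condition. Running through the $k-2$ inductive levels yields the desired $[2k, k]_q$ code in time $\operatorname{poly}(k)$.

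The main obstacle will be the careful degree accounting in the inductive step. A crude bound on the degree of each determinant $D_{i,j}$ as a polynomial in $(x,y)$ would yield a field-size requirement of $k^5$ or worse, losing the tight bound $O(k^4)$ that matches the existential result of \cite{con2023reed}. Obtaining the sharper bound requires exploiting the Vandermonde structure to factor out cancellations coming only from the fixed old points and to control the effective degree of the determinants in the two new variables. A secondary, more routine, difficulty lies in the base case: one must verify that no evaluation vector over $\F_{q'}$ with $q' < 7$ works, which is combinatorial but requires an exhaustive case check.
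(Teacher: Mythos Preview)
Your overall strategy matches the paper's: induction on $k$ with base case $k=2$ over $\F_7$ (the paper uses $\balpha=(0,1,2,5)$), extend by two new evaluation points at each step inside a fixed $\F_q$ with $q=\Theta(k^4)$, and search $\F_q^2$ for a good pair.

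The genuine gap is in the inductive step, and you have misidentified the obstacle. The degree of each $D_{I,J}$ in the two new variables is in fact easy to bound: for $I,J\in[2k]^{2k-1}$ only the last two rows of $V_{k,2k-1,I,J}$ can involve $\alpha_{2k-1}$ or $\alpha_{2k}$, and each such row contributes a single factor of degree at most $k-1$ to the determinant, so the total degree in $(X,Y)$ is at most $2(k-1)$ with no Vandermonde factoring required. The real difficulty is that Schwartz--Zippel only applies to a \emph{nonzero} polynomial, and you give no argument that $D_{I,J}(X,Y)$ remains nonzero after substituting the already-chosen $\alpha_1,\ldots,\alpha_{2k-2}$. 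This is exactly where the inductive hypothesis must be used, and it is not automatic.

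The paper does not attack this via Schwartz--Zippel. Instead, restricting to the first $2k-3$ rows of $V$ (which depend only on the old points), it shows that a certain $(2k-3)\times(2k-3)$ submatrix $M$ is nonsingular. Proving $\ker M=\{0\}$ uses the inductive hypothesis to reduce any kernel vector to one of the form $(0,h_1,\ldots,h_{k-2},-h_1,\ldots,-h_{k-2})$, and then a separate combinatorial lemma (Claim~\ref{cl:22}: if $h\in\F_q[x]_{<k-1}$ satisfies $h(\balpha_{I^\star})=h(\balpha_{J^\star})$ with $d_H(I^\star,J^\star)\ge k-2$, then $h$ is constant) to conclude $h=0$. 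With $M$ nonsingular, the pair $(\tilde f,\tilde g)$ solving $f(\balpha_{I^\star})=g(\balpha_{J^\star})$ is unique up to one free scalar in $\F_q$; the paper then counts at most $5(k-1)^2$ bad $(\alpha_{2k-1},\alpha_{2k})$ per such scalar and per pair $(I^\star,J^\star)$, yielding $O(k^4 q)$ bad pairs in total. Your plan will work once you supply the analogue of this nonvanishing argument; without it, the Schwartz--Zippel step is unjustified.
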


We emphasize that this result matches the existential bound on the field size given in \cite{con2023reed}. 

\section{Preliminaries}
Throughout, $\F_q$ will denote a finite field of order $q$ and $\Fq[X_1, \ldots, X_n]$ will denote the ring of polynomials in $X_1,\dots,X_n$ over $\Fq$.
We recall the notions of a subsequence and a longest common subsequence. 
\begin{definition}
        A \emph{subsequence} of a string $\bs$ is a string obtained by removing some (possibly none) of the symbols in $\bs$. 
\end{definition}

\begin{definition}
        Let $\bs,\bs'$ be strings (of possibly different lengths) over an alphabet $\Sigma$. A \emph{longest common subsequence} between $\bs$ and $\bs'$, is a subsequence of both $\bs$ and $\bs'$, of maximal length. We denote by $ \textup{LCS}(\bs,\bs')$ the length of a longest common subsequence of $\bs$ and $\bs'$.
	
        The \emph{edit distance} or \emph{insdel distance} between $\bs$ and $\bs'$, denoted by $\ed{\bs}{\bs'}$, is the minimum number of insertions and deletions needed to transform $\bs$ into $\bs'$. 
\end{definition}

\begin{example}
Consider the following vectors over $\F_5$:
\begin{align*}
\bs = (2, 4, 1, 3, 0, 2) \quad \text{and} \quad \bs' = (4, 3, 2, 1, 0).
\end{align*}
A common subsequence of $\bs$ and $\bs'$ is $(4, 3, 0)$:
\begin{align*}
(2, \underline{4}, 1, \underline{3}, \underline{0}, 2) \quad \text{and} \quad (\underline{4}, \underline{3}, 2, 1, \underline{0}).
\end{align*}
Another is $(4, 1, 0)$, and both have length 3. One can check that no longer common subsequence exists. Thus, the length of a longest common subsequence is $\textup{LCS}(\bs, \bs') = 3.$
\end{example}

For a code $\mC$ we use the following notations:
\begin{align*}
    \textup{LCS}(\mC) &:= \max\{\textup{LCS}(\bc,\bc') : \bc,\bc' \in \mC, \bc \ne \bc'\}, \\
    \textup{ED}(\mC) &:= \min\{\textup{ED}(\bc,\bc') : \bc,\bc' \in \mC, \bc \ne \bc'\}.
\end{align*} 
We have $\textup{ED}(\mC)=2n-2\textup{LCS}(\mC)$ if $\mC$ has length $n$.
It is well known that the insdel correction capability of a code is determined by the LCS of its codewords. Specifically,
\begin{lemma} \label{lem:code-lcs}
    A code $\mC$ can correct $\delta$ insdel errors if and only $\textup{LCS}(\bc,\bc')\leq n - \delta - 1$ for any distinct $\bc,\bc'\in \mC$, i.e., $\textup{LCS}(\mC) \leq n - \delta - 1$.
\end{lemma}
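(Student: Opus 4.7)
The plan is to leverage the identity $\ed{\bc}{\bc'} = 2n - 2\textup{LCS}(\bc,\bc')$ (stated in the excerpt immediately before the lemma) together with the triangle inequality for the edit metric. The condition $\textup{LCS}(\mC) \leq n - \delta - 1$ is equivalent to $\ed{\bc}{\bc'} \geq 2\delta + 2$ for every pair of distinct codewords, so the lemma reduces to showing that $\mC$ corrects $\delta$ insdel errors if and only if its minimum edit distance exceeds $2\delta$. Throughout, I would use the standard formalization: $\mC$ corrects $\delta$ insdels exactly when no single word $\bs$ is obtainable from two distinct codewords via at most $\delta$ insertions and deletions each.

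For the ``if'' direction I would argue by contradiction. Assume $\textup{LCS}(\bc,\bc') \leq n - \delta - 1$ for all distinct $\bc,\bc' \in \mC$ but correction fails, so that some word $\bs$ arises from two distinct codewords $\bc, \bc'$ each via at most $\delta$ insdel operations. The triangle inequality for $\textup{ED}$ then yields
\[
\ed{\bc}{\bc'} \;\leq\; \ed{\bc}{\bs} + \ed{\bs}{\bc'} \;\leq\; 2\delta,
\]
which by the LCS/ED identity forces $\textup{LCS}(\bc,\bc') \geq n - \delta$, contradicting the hypothesis.

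For the converse I would contrapose. Suppose there exist distinct $\bc,\bc' \in \mC$ with $\textup{LCS}(\bc,\bc') = \ell \geq n - \delta$ and let $\bs$ be a common subsequence of length $\ell$. Then $\bs$ is obtained from $\bc$ by exactly $n - \ell \leq \delta$ deletions, and likewise from $\bc'$; thus $\bs$ is a received word consistent with both codewords under at most $\delta$ insdel errors, so no decoder can disambiguate them, showing that $\mC$ fails to correct $\delta$ insdels.

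The only technical point I expect to spell out carefully is the triangle inequality for the edit distance: concatenating an optimal edit sequence from $\bc$ to $\bs$ with one from $\bs$ to $\bc'$ gives a valid sequence of at most $\ed{\bc}{\bs} + \ed{\bs}{\bc'}$ insdels from $\bc$ to $\bc'$. Beyond this, both directions amount to unpacking the definitions of LCS, edit distance, and what it means for a code to correct a given number of insdel errors.
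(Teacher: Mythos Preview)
Your proposal is correct and is the standard argument for this well-known fact. The paper does not actually supply a proof of this lemma; it is simply asserted as well known, so there is no paper proof to compare against.
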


We observe that the edit distance $\ed{\cdot}{\cdot}$ satisfies the following for any $\bc,\bc' \in \mathbb{F}_q^n$:
\begin{align} \label{eq:isometrie1}
\ed{\bc}{\bc'} &= \ed{\lambda \bc}{\lambda \bc'}
 \quad \text{for any $\lambda \in \mathbb{F}_q^*$,} \\
\ed{\bc}{\bc'} &= \ed{\bc+\mathbf{1}}{\bc'+\mathbf{1}}, \label{eq:isometrie2}
\end{align}
where $\mathbf{1}=(1,1,\dots,1) \in \mathbb{F}_q^n$.
We denote the Hamming distance between $\bc$ and $\bc'$ by $d_H(\bc,\bc')$ and the Hamming weight of $\bc$ by $w_H(\bc)$.

In this paper, we are interested in RS codes and their insertion deletion correction capabilities. We start by citing the (non-asymptotic version of) rate-error-correction trade-off for \emph{linear} codes correcting insdel errors.
    \begin{theorem}[\textnormal{Half-Singleton bound; see~\cite[Corollary 5.2]{cheng2020efficient}}]  \label{thm:rsopt}
        An $[n,k]_q$ linear code $\mC$ can correct at most $n - 2k + 1$ insdel errors.
        Equivalently, $\textup{LCS}(\mC) \geq 2k - 2$.
    \end{theorem}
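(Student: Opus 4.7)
The plan is to establish the equivalent formulation $\textup{LCS}(\mC)\geq 2k-2$; Lemma~\ref{lem:code-lcs} then translates this into the stated bound on correctable insdel errors. I may assume $n\geq 2k-1$, since otherwise $n-2k+1\leq 0$ and the claim holds vacuously. The concrete target is to exhibit two distinct codewords in $\mC$ admitting a common subsequence of length at least $2k-2$.

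The first step is to set up a dichotomy via the $\F_q$-linear projection
\[
\phi:\mC\longrightarrow \F_q^{2k-2},\qquad \phi(\bc)=(\bc_1,\dots,\bc_{k-1},\bc_{n-k+2},\dots,\bc_n),
\]
which records the $k-1$ leftmost and $k-1$ rightmost coordinates of a codeword. If $\phi$ fails to be injective, any non-zero $\bc$ in its kernel vanishes on the $2k-2$ positions in the disjoint union $\{1,\dots,k-1\}\cup\{n-k+2,\dots,n\}$; comparing $\bc$ with $\bzero\in\mC$ and aligning these zero entries against the corresponding zeros of $\bzero$ immediately yields $\textup{LCS}(\bc,\bzero)\geq 2k-2$.

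The main obstacle is the case where $\phi$ is injective, which forces $k\geq 2$ and makes the subcodes $\mC_L=\{\bc\in\mC:\bc_1=\cdots=\bc_{k-1}=0\}$ and $\mC_R=\{\bc\in\mC:\bc_{n-k+2}=\cdots=\bc_n=0\}$ both exactly one-dimensional, so no single codeword carries all $2k-2$ boundary zeros. My plan here is to fix non-zero generators $\bc_L\in\mC_L$ and $\bc_R\in\mC_R$, and to search for a pair of distinct codewords---either among $\{\bc_L,\bc_R,\bc_L+\bc_R\}$ or among codewords built from a generator matrix in reduced row-echelon form---whose alignment combines the $k-1$ matches obtained by pairing the zero prefix of one codeword against the zero suffix of the other with $k-1$ additional matches coming from the ``middle'' block $\{k,\dots,n-k+1\}$. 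Producing these extra matches is the delicate step: once there is no codeword with many boundary zeros, one must merge aligned and shifted matches carefully, via a pivot-based analysis of the generator matrix. This detailed case analysis is the content of \cite{cheng2020efficient}, to which I would ultimately appeal to close the injective case and complete the proof.
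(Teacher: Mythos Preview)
The paper does not give its own proof of this theorem; it is stated with a citation to \cite{cheng2020efficient} and used as a black box. Hence there is nothing in the paper to compare your argument against.

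Regarding the proposal itself: the non-injective branch is fine --- a nonzero $\bc\in\ker\phi$ has at least $2k-2$ zero coordinates (using $n\ge 2k-1$), so $\textup{LCS}(\bc,\bzero)\ge 2k-2$. However, the injective branch is not a proof: you explicitly defer the entire case to \cite{cheng2020efficient}, which is exactly what the paper already does by citing the result. So the proposal does not go beyond the paper in any substantive way.

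There is also a genuine error in your setup for the injective case. Injectivity of $\phi$ does \emph{not} force $\mC_L$ and $\mC_R$ to be one-dimensional; it only forces $\mC_L\cap\mC_R=\{\bzero\}$. For instance, with $n=6$, $k=3$ and $\mC$ generated by $(0,0,1,0,1,0)$, $(0,0,0,1,0,1)$, $(1,0,0,0,0,0)$, the map $\phi(\bc)=(\bc_1,\bc_2,\bc_5,\bc_6)$ is injective, yet $\mC_L$ has dimension~$2$. Any pivot-based argument in the injective case would need to accommodate this, so the sketch you give for that branch would have to be reworked even before appealing to the reference.
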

In this paper, we call codes attaining the bound of Theorem~\ref{thm:rsopt} \emph{optimal} codes.

\begin{notation}
We say that a vector of indices $I=(I_1,\dots,I_{\ell})\in [n]^{\ell}$ is an \emph{increasing sequence} if its coordinates are monotonically strictly increasing, i.e., for any  $1\leq i<j\leq \ell$, we have $I_i<I_j$. For an increasing vector $I \in [n]^{\ell}$ and an evaluation vector $\balpha=(\alpha_1,\dots,\alpha_n) \in \F_q^n$, we denote by $\balpha_I$ the subsequence of $\balpha$ indexed by $I$, that is, $\balpha_I := (\alpha_{I_1},\dots,\alpha_{I_{\ell}})$. Moreover, for $f \in \F_q[x]$ we let $f(\balpha_I) := (f(\alpha_{I_1}),\dots,f(\alpha_{I_{\ell}}))$.
\end{notation}

For two increasing sequences $I,J\in [n]^{\ell}$, we define the following matrix of order $\ell\times (2k-1)$ in the formal variables $\bX=(X_1,\ldots,X_n)$, which we denote by $V_{k,\ell, I,J}(\bX)$:
    \begin{align} 
    \label{eq:mat-lcs-eq}
V_{k,\ell, I,J}(\bX) = \begin{pmatrix} 
	1 & X_{I_1} & \ldots & X_{I_1}^{k-1}  & X_{J_1} &\ldots & X_{J_1}^{k-1} \\ 
	1 & X_{I_2} & \ldots & X_{I_2}^{k-1}  & X_{J_2} &\ldots & X_{J_2}^{k-1} \\
	\vdots &\vdots & \ddots &\vdots &\vdots &\ddots &\vdots \\
	1 & X_{I_{\ell}} & \ldots & X_{I_{\ell}}^{k-1}  & X_{J_{\ell}} &\ldots & X_{J_{\ell}}^{k-1}\\
	\end{pmatrix}.
	\end{align}

In~\cite{con2024random}, the following algebraic condition was proved.

\begin{lemma}[see~\textnormal{\cite[Lemma 12]{con2024random}}] \label{lem:alg-cond}
Let $n,k$, and $\ell$ be integers such that $2k-1\leq \ell \leq n$. Consider the $\text{RS}_{n,k}(\balpha)$ code associated with the evaluation points $\balpha=(\alpha_1,\ldots,\alpha_n)$. If the code cannot correct $n-\ell$ insdel errors, then there exist two increasing sequences $I,J\in [n]^{\ell}$ where $d_H(I,J)\geq \ell - k + 1$ such that $\textup{rank} (V_{k,\ell,I,J}(\balpha)) < 2k-1$.
\end{lemma}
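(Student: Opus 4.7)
The plan is to translate the failure-to-correct condition into an algebraic statement via Lemma~\ref{lem:code-lcs} and then read off the required linear dependence directly from the definition of $V_{k,\ell,I,J}$. Suppose $\RSS$ cannot correct $n-\ell$ insdel errors. By Lemma~\ref{lem:code-lcs}, there exist distinct codewords $\bc,\bc' \in \RSS$ with $\textup{LCS}(\bc,\bc') \geq \ell$. Write $\bc = (f(\alpha_1),\ldots,f(\alpha_n))$ and $\bc' = (g(\alpha_1),\ldots,g(\alpha_n))$ for distinct $f,g \in \F_q[x]_{<k}$. A common subsequence of length $\ell$ corresponds, by definition, to a pair of increasing index sequences $I,J \in [n]^{\ell}$ such that $f(\alpha_{I_m}) = g(\alpha_{J_m})$ for every $m \in [\ell]$.

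Next I would convert this system into a matrix equation over $\F_q$. Expanding $f(x) = \sum_{i=0}^{k-1} a_i x^i$ and $g(x) = \sum_{i=0}^{k-1} b_i x^i$, the $\ell$ equalities above become, upon moving everything to one side,
\[
(a_0 - b_0) \;+\; \sum_{i=1}^{k-1} a_i\, \alpha_{I_m}^i \;-\; \sum_{i=1}^{k-1} b_i\, \alpha_{J_m}^i \;=\; 0 \qquad (m=1,\dots,\ell).
\]
This is exactly the assertion that the vector $\mathbf{v} := (a_0 - b_0,\; a_1,\dots,a_{k-1},\; -b_1,\dots,-b_{k-1})^{\!\top} \in \F_q^{2k-1}$ lies in the right kernel of $V_{k,\ell,I,J}(\balpha)$. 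The key verification is that $\mathbf{v} \neq \mathbf{0}$: if it were zero, we would get $a_i = b_i = 0$ for all $i \geq 1$ and $a_0 = b_0$, forcing $f = g$, contradicting $\bc \neq \bc'$. Hence $V_{k,\ell,I,J}(\balpha)$ has a nonzero kernel vector, so its rank is at most $2k-2 < 2k-1$.

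Finally, to bound the Hamming distance $d_H(I,J)$, I would argue using the degree of $f-g$. Whenever $I_m = J_m$, the equation $f(\alpha_{I_m}) = g(\alpha_{J_m})$ yields $(f-g)(\alpha_{I_m}) = 0$. Since $f-g \in \F_q[x]_{<k}$ is a nonzero polynomial of degree at most $k-1$, it has at most $k-1$ roots in $\F_q$, and because the $\alpha_i$ are distinct, at most $k-1$ of the coordinates $m \in [\ell]$ can satisfy $I_m = J_m$. Consequently $d_H(I,J) \geq \ell - (k-1) = \ell - k + 1$, as required.

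There is no genuine obstacle here; the only subtlety is bookkeeping around the fact that $V_{k,\ell,I,J}$ has a single merged constant column (rather than two), which forces the coefficient on the first column to be $a_0 - b_0$ and makes the argument that $\mathbf v \neq \mathbf 0$ a separate (but trivial) case check. Everything else is a mechanical translation between the combinatorial LCS condition and linear algebra, together with the elementary root-count bound for $f-g$.
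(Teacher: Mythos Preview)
Your proof is correct and complete. Note, however, that the paper does not actually supply its own proof of this lemma: it is quoted from \cite[Lemma 12]{con2024random} and stated without proof, so there is nothing in the present paper to compare your argument against. Your approach---translating the LCS condition into $f(\alpha_{I_m})=g(\alpha_{J_m})$, reading off a nonzero kernel vector of $V_{k,\ell,I,J}(\balpha)$, and bounding the number of coincidences $I_m=J_m$ via the root count of $f-g$---is exactly the standard argument one would expect, and matches the reasoning in the cited reference.
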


We will also need the following notation and lemma. The lemma we state here in a more general setting than what was done in~\cite{Xing}. It can also be seen as a special case of Lemma~\ref{lem:iff}.

\begin{notation} \label{not:ag}
$AGL(\mathbb{F}_q)$ denotes the group of affine maps $f_{a,b}$ with $f_{a,b}(x)=ax+b$, $a \in \mathbb{F}_q^*$ and $b \in \mathbb{F}_q$. In other words, given $x \in \mathbb{F}_q$, $f_{a,b}(x)=ax+b \in \mathbb{F}_q$.
\end{notation}

\begin{lemma} (see also \textnormal{\cite[Lemma 4.10]{Xing}}) \label{lemxing}
$\text{RS}_{n,2}(\balpha)$ has insdel distance $2n - 2\ell$ if and only if $f_{a,b}(\balpha_{I}) \ne {\balpha}_J$ for any $f_{a,b} \in AGL(\mathbb{F}_q)$ and any two increasing vectors $I,J \in [n]^{\ell}$ with $d_H(I,J) \geq \ell-1$.
\end{lemma}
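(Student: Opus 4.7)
My plan is to prove this equivalence by translating the statement ``two distinct codewords of $\text{RS}_{n,2}(\balpha)$ share a common subsequence of length $\ell$'' into the algebraic statement that some affine map of $\F_q$ sends $\balpha_I$ to $\balpha_J$. Every codeword has the form $(a\alpha_1+b,\ldots,a\alpha_n+b)$ for some $(a,b)\in\F_q^2$, and for $\ell \geq 2$ only the pairs with $a \neq 0$ can be relevant, since a constant codeword agrees with any non-constant codeword in at most one position. In this sense the lemma is essentially the geometric reformulation of Lemma~\ref{lem:alg-cond} at $k=2$: rank-deficiency of $V_{2,\ell,I,J}(\balpha)$ corresponds exactly to the existence of an $f_{a,b} \in AGL(\F_q)$ with $f_{a,b}(\balpha_I) = \balpha_J$.

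For the contrapositive of the forward direction, I assume that some $f_{a,b} \in AGL(\F_q)$ and increasing $I,J \in [n]^\ell$ with $d_H(I,J) \geq \ell - 1$ satisfy $f_{a,b}(\balpha_I) = \balpha_J$. Then the codewords $\bc_1 := \balpha$ and $\bc_2 := f_{a,b}(\balpha)$ coincide on the positions $(\bc_2)_I = (\bc_1)_J$, and thus share a common subsequence of length~$\ell$. These codewords are distinct because $d_H(I,J) \geq 1$ (for $\ell \geq 2$) forces $f_{a,b} \neq \text{id}$: otherwise $\balpha_I = \balpha_J$ would give $I = J$, contradicting $d_H(I,J)\geq 1$.

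For the reverse direction, I assume $\textup{LCS}(\mC) \geq \ell$ and pick distinct codewords $f_{a,b}(\balpha), f_{a',b'}(\balpha)$ together with increasing $I,J \in [n]^\ell$ satisfying $a\alpha_{I_i}+b = a'\alpha_{J_i}+b'$ for all $i$. For $\ell \geq 2$, neither $a$ nor $a'$ can vanish, since a constant on one side would force $\ell$ distinct values on the other to be equal. Solving yields $\alpha_{J_i} = g(\alpha_{I_i})$ with $g(x) := (a/a')x + (b-b')/a' \in AGL(\F_q)$, and the distinctness of the codewords, $(a,b)\neq(a',b')$, ensures $g \neq \text{id}$.

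The only non-bookkeeping step is the bound $d_H(I,J) \geq \ell - 1$, which follows from the fixed-point count in $AGL(\F_q)$: a non-identity affine map over $\F_q$ has at most one fixed point---none if it is a nontrivial translation, exactly one otherwise. Since $I_i = J_i$ forces $\alpha_{I_i}$ to be a fixed point of $g$, at most one such coincidence can occur, giving $d_H(I,J) \geq \ell - 1$. This fixed-point analysis is the only subtle ingredient; the case $\ell = 1$ is essentially vacuous and can be treated separately.
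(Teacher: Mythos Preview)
Your proof is correct and follows essentially the same approach as the paper, which defers to Lemma~\ref{lem:iff} (the general-$k$ version) and the cited reference. Both arguments rest on the same observation: codewords of $\text{RS}_{n,2}(\balpha)$ are exactly the vectors $(a\alpha_i+b)_i$, so a length-$\ell$ common subsequence between two distinct codewords yields an affine relation $g(\balpha_I)=\balpha_J$ with $g\in AGL(\F_q)\setminus\{\mathrm{id}\}$, and conversely. Where the paper's Lemma~\ref{lem:iff} normalizes one polynomial via the isometries~\eqref{eq:isometrie1}--\eqref{eq:isometrie2}, you instead directly form the composite $g=f_{a',b'}^{-1}\circ f_{a,b}$; these are the same maneuver in slightly different bookkeeping.

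One point worth highlighting: your fixed-point argument for the restriction $d_H(I,J)\geq \ell-1$ is precisely what is needed to pass from the statement of Lemma~\ref{lem:iff} (which carries no $d_H$ constraint) to the formulation of Lemma~\ref{lemxing}. The paper leaves this step implicit, so your proof is in fact more self-contained on that count.
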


\section{Full-Length RS Codes} 

In this section, we consider full-length RS codes ($q=n$), i.e., the evaluation vector $\balpha=(\alpha_1,\dots,\alpha_q)$ is such that $\{\alpha_1,\dots,\alpha_{q}\} = \F_q$, and can be seen as an ordering or a permutation of $\F_q$. We show how the order in which the elements of $\F_q$ appear in $\balpha$ heavily influences the insdel distance of the code.

\subsection{Characterizing RS Codes which cannot Correct a Single Insdel error}
\begin{definition} \label{def:equi}
    We say that the evaluation vectors $\balpha =(\alpha_1,\dots,\alpha_q) \in \F_q^q$ and $\widetilde{\balpha} =(\widetilde{\alpha}_1,\dots,\widetilde{\alpha}_q)\in \F_q^q$ are \textit{equivalent} if $(\lambda \alpha_1 + \mu, \dots, \lambda \alpha_q + \mu) = \widetilde{\balpha}$ for some $\lambda \in \F_q^*$ and $\mu \in \F_q$.
\end{definition}

It is easy to see that for equivalent $\balpha, \widetilde{\balpha} \in \F_q^q$ we have $\text{RS}_{q,2}(\balpha) = \text{RS}_{q,2}(\widetilde{\balpha})$. Moreover, in every equivalence class, there are $q(q-1)$ vectors (we have $q-1$ choices for $\lambda$ and $q$ choices for $\mu$). Here we used that $\balpha$ is not a constant vector, since it is an evaluation vector.

\begin{lemma} \label{lem:eqiii}
    For a primitive element $\theta$ of $\F_q$, consider the following vectors:
    \begin{enumerate}
        \item $(0,1,\theta,\dots,\theta^{q-2})$;
        \item $(\theta^{q-2}, \dots, \theta, 1, 0)$;
        \item $(0,1,2,\dots, q-1)$ if $q$ is prime.
    \end{enumerate}
    Then, the $\text{RS}_{q,2}(\balpha)$ code cannot correct a single insdel error if and only if~$\balpha$ is equivalent to one of the vectors in $(1)-(3)$ for any $\theta$.
\end{lemma}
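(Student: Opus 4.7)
The plan is to use Lemma~\ref{lemxing} with $\ell=q-1$. It tells us that the code $\textup{RS}_{q,2}(\balpha)$ fails to correct a single insdel error if and only if there exist $f_{a,b}\in AGL(\F_q)$ and increasing sequences $I,J\in [q]^{q-1}$ with $d_H(I,J)\ge q-2$ satisfying $f_{a,b}(\balpha_I)=\balpha_J$. Since each of $I,J$ omits a single index of $[q]$, say $i_0$ and $j_0$, a direct check shows that for $i_0<j_0$ one has $d_H(I,J)=j_0-i_0$, so the constraint $d_H(I,J)\ge q-2$ restricts the possibilities to $(i_0,j_0)\in\{(1,q),(1,q-1),(2,q)\}$. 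The case $i_0>j_0$ reduces to the previous one by swapping $I\leftrightarrow J$ and replacing $f$ with $f^{-1}$, producing no new equivalence classes.

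Next, one analyzes each case and extracts the structure forced on $f_{a,b}$ and $\balpha$. For $(i_0,j_0)=(1,q)$, the relations $f(\alpha_{t+1})=\alpha_t$ together with the bijectivity of $f$ force $f$ to act on $\F_q$ as a single $q$-cycle. Since any $f_{a,b}$ with $a\ne 1$ has the unique fixed point $b/(1-a)$ and hence cannot be a $q$-cycle, we must have $a=1$ and $b\ne 0$; the order of such a translation equals $\mathrm{char}(\F_q)$, so a $q$-cycle is possible only when $q$ is prime, and then $\balpha$ is an arithmetic progression equivalent to $(0,1,\dots,q-1)$, i.e., form~(3). For $(i_0,j_0)=(1,q-1)$, the map $f$ fixes $\alpha_q$ and cyclically permutes $\alpha_{q-1}\to\alpha_{q-2}\to\cdots\to\alpha_1\to\alpha_{q-1}$, so $a\ne 1$, $\alpha_q=b/(1-a)$, and the cycle length $q-1$ forces $a$ to be primitive in $\F_q^*$. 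Conjugating by $x\mapsto(x-\alpha_q)/(\alpha_{q-1}-\alpha_q)$ replaces $f$ with $y\mapsto ay$ and brings $\balpha$ to $(a^{q-2},\dots,a,1,0)$, which is form~(2) with $\theta=a$. The case $(i_0,j_0)=(2,q)$ is analogous: $\alpha_1$ is the fixed point, $a$ is primitive, and after the analogous normalization followed by an additional rescaling by $a$, $\balpha$ becomes $(0,1,a^{-1},\dots,a^{-(q-2)})$, i.e., form~(1) with $\theta=a^{-1}$.

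For the converse implication, one exhibits in each form explicit $f_{a,b}, I$ and $J$ realizing the failure: for form~(1), $f(x)=\theta x$ with $I=(1,2,\dots,q-1)$ and $J=(1,3,4,\dots,q)$; for form~(2), $f(x)=\theta^{-1}x$ with $I=(1,2,\dots,q-2,q)$ and $J=(2,3,\dots,q-1,q)$; for form~(3) (with $q$ prime), $f(x)=x-1$ with $I=(2,\dots,q)$ and $J=(1,\dots,q-1)$. In each case, both $f(\balpha_I)=\balpha_J$ and $d_H(I,J)\ge q-2$ are immediate by inspection.

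The main obstacle is the structural classification in the second step: pinning down that the only elements of $AGL(\F_q)$ realizing the required orbit patterns on $\F_q$ are either a single $q$-cycle (forcing $a=1$ and $q$ prime, via the order of the translation $x\mapsto x+b$) or a fixed point plus a single $(q-1)$-cycle (forcing $a$ to be primitive), and then checking that the normalizing affine change of coordinates places $\balpha$ in exactly the listed form with the correct primitive $\theta$ (so that the cases $(i_0,j_0)=(1,q-1)$ and $(2,q)$ yield distinct forms~(2) and~(1), respectively).
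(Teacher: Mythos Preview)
Your argument is correct and follows the same overall scaffold as the paper: invoke Lemma~\ref{lemxing} with $\ell=q-1$, reduce to the three pairs $(i_0,j_0)\in\{(1,q),(1,q-1),(2,q)\}$ (using what the paper later records as Claim~\ref{cl:intersections}), analyze each, and exhibit explicit witnesses for the converse.

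Where you differ is in the analysis of each case. The paper first normalizes $\balpha$ (setting two entries to $0,1$), writes down the resulting system of equations in $a,b,\alpha_i$, and solves it algebraically; e.g., in the $d_H(I,J)=q-1$ case it derives $a^{q-1}+\cdots+a+1=0$ and argues $a=1$ from that. You instead read off the cycle structure of $f_{a,b}$ on $\F_q$ directly: a $q$-cycle forces $a=1$ (since $a\ne 1$ has the fixed point $b/(1-a)$) and then $q$ prime via the order of a translation, while a fixed point plus a $(q-1)$-cycle forces $a$ primitive. Only afterwards do you normalize to land exactly on forms~(1) or~(2). This is a cleaner, more structural route and avoids the explicit systems; it also makes transparent why the two $(q-2)$ cases yield the two mirror forms. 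For the converse, the paper exhibits codewords $\bc$ with $\textup{LCS}(\balpha,\bc)=q-1$, whereas you exhibit the data $(f,I,J)$ verifying Lemma~\ref{lemxing} directly; the two are equivalent. One nitpick: in your case $(i_0,j_0)=(2,q)$, no ``additional rescaling by $a$'' is needed if you normalize so that $\alpha_1\mapsto 0$ and $\alpha_2\mapsto 1$; then $\balpha$ already becomes $(0,1,a^{-1},\dots,a^{-(q-2)})$.
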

\begin{proof}
Suppose that $\balpha$ is such that $\text{RS}_{q,2}(\balpha)$ has insdel distance~2. Then there exist increasing sequences $I,J \in [q]^{q-1}$ with $d_H(I,J)\ge (q-1)-1 = q-2$, and with the property that $f_{a,b}(\balpha_{I}) = {\balpha}_J$ for some $f_{a,b} \in AGL(\F_q)$; see Notation~\ref{not:ag} and Lemma~\ref{lemxing}. 

We start with the case $d_H(I,J)=q-1$. The only possibilities for $I,J \in [q]^{q-1}$ are then $I=(2,\dots,q)$ and $J=(1,\dots,q-1)$, and vice-versa. W.l.o.g. let $I=(2,\dots,q)$ and $J=(1,\dots,q-1)$. By subtracting an appropriate constant vector and multiplying by a scalar, and because equivalent evaluation vectors give rise to the same code, we can further assume that $\balpha=(0,1,\alpha_3,\dots,\alpha_{q})$. Since $f_{a,b}(\balpha_{I}) = {\balpha}_J$ we obtain the following system of equations:
\begin{align*}
    \begin{cases}
        a\cdot 0 + b &= 1 \\
        a\cdot 1 + b &= \alpha_3 \\
        a\cdot \alpha_3 + b &= \alpha_4 \\
        &\vdots \\
        a\cdot \alpha_{q-1} + b &= \alpha_q \\
        a\cdot \alpha_{q} + b &= 0
    \end{cases}
\end{align*}
where the last equation comes from the fact that $f_{a,b}$ acts bijectively on $\F_q$. From the above set of equations we obtain
\begin{align*}
    b=1, \quad \alpha_3=a+1,\quad \dots, \quad \alpha_q=a^{q-2}+\dots+a+1. 
\end{align*}
Since additionally we know that $a \cdot \alpha_q+1=0$, we have that $a^{q-1}+\dots+a+1=0$. If $a\ne 1$, then $\frac{a^q-1}{a-1}=a^{q-1}+\dots+a+1=0$. Therefore $a^q=1$, which implies $a=1$. We obtain a contradiction. Thus, the only solution is $a=1$, and from the above set of equations we obtain that $\balpha=(0,1,2,\dots,q-1)$. Since $\balpha$ is an evaluation vector, this case can only occur if $q$ is a prime.

Now suppose that $d_H(I,J)=q-2$. This implies that either $I$ and $J$ are equal to $(2,\dots, q)$ and $(1,\dots,q-2,q)$, or to $(1,3,\dots, q)$ and $(1,\dots,q-1)$. Assume that $I=(2,\dots, q)$ and $J=(1,\dots,q-2,q)$. As before, through scaling, we can assume that $\balpha=(\alpha_1,\dots,\alpha_{q-2},1,0)$. Since $f_{a,b}(\balpha_{I}) = {\balpha}_J$ we obtain:
\begin{align*}
    \begin{cases}
        a\cdot 0 + b &= 0 \\
        a\cdot 1 + b &= \alpha_{q-2} \\
        a\cdot \alpha_{q-2} + b &= \alpha_{q-3} \\
        &\vdots \\
        a\cdot \alpha_{2} + b &= \alpha_1 \\
        a\cdot \alpha_{1} + b &= 1
    \end{cases}
\end{align*}
where again the last equation comes from the fact that $f_{a,b}$ is a bijection on $\F_q$. We have
\begin{align*}
    b=0,\quad \alpha_{q-2} = a, \quad , \alpha_{q-3} = a^2, \quad \dots, \quad \alpha_1 = a^{q-2}.
\end{align*}
Since $ a\cdot \alpha_{1} + b = 1$ we also have $a^{q-1} = 1$. Because of the assumption that $\balpha$ is an evaluation vector for an RS code over $\F_q$ of length $q$, we need to have $\{\alpha_1,\dots,\alpha_q\} = \{0,1,a,\dots,a^{q-2}\}=\F_q$ and thus $a$ is a primitive element of $\F_q$. Therefore $\balpha$ is equivalent to $(a^{q-2},\dots, a^2,a,1,0)$.

The case where $I=(1,3,\dots, q)$ and $J=(1,\dots,q-1)$ can be proven analogously, where one arrives at the conclusion that $\balpha$ is equivalent to $(0,1,a,a^2,\dots,a^{q-2})$ for some primitive element~$a$ of $\F_q$.

Now suppose that $\balpha$ is equivalent to one of the vectors in $(1)-(3)$. It is enough to show that for any of the vectors listed in this lemma, the 2-dimensional RS code with that vector as the evaluation vector $\balpha$, contains a codeword $\bc$, different from the evaluation vector, which shares a common subsequence of length $q-1$ with the evaluation vector. Let $\theta \in \F_q$ be a primitive element.
    \begin{enumerate}
        \item If $\balpha=(0,1,\theta,\dots,\theta^{q-2})$, then let $\bc = \theta \balpha = (0,\theta,\dots,\theta^{q-2},1)$. Therefore we have $\text{LCS}(\balpha,\bc ) = q-1$.
        \item If $\balpha=(\theta^{q-2}, \dots, \theta, 1, 0)$, then let $\bc  = \theta \balpha = (1, \theta^{q-2},\dots, \theta,0)$. Therefore we have $\text{LCS}(\balpha,\bc ) = q-1$.
        \item If $\balpha=(0,1,\dots, q-1)$, then let $\bc  = \balpha + \textbf{1} = (1, 2, \dots, q-1, 0)$. Therefore we have $\text{LCS}(\balpha,\bc ) = q-1$. \qedhere
    \end{enumerate}
\end{proof} 

In the next proposition, which is a consequence of Lemma~\ref{lem:eqiii}, $\phi$ denotes Euler's totient function, i.e., $\phi(n)$ is the number of positive integers up to $n$ that are relatively prime to $n$.

\begin{proposition} \label{prop:phiphi}
    There are at least 
    \begin{align*}
         \begin{cases}
           (q-2)! -2\phi(q-1) - 1 &\textnormal{ if $q$ is prime,} \\
           (q-2)! -2\phi(q-1)  &\textnormal{ if $q$ is not prime,}
        \end{cases}
    \end{align*}
 equivalence classes in $\F_q^q$ such that for all $\balpha$ in any of these classes the $\text{RS}_{q,2}(\balpha)$ code can correct at least a single insdel error. 
\end{proposition}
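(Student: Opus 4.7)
The plan is a straightforward counting argument: compute the total number of equivalence classes of length-$q$ evaluation vectors over $\F_q$, then upper bound the number of ``bad'' equivalence classes (those whose codes fail to correct even a single insdel error), and finally subtract.

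For the total count, I would first observe that the action $\balpha \mapsto \lambda \balpha + \mu \mathbf{1}$ with $(\lambda,\mu) \in \F_q^* \times \F_q$ preserves the set of permutations of $\F_q$, since $x \mapsto \lambda x + \mu$ is a bijection of $\F_q$. Combined with the fact stated after Definition~\ref{def:equi} that each equivalence class of a non-constant vector has size exactly $q(q-1)$, this partitions the $q!$ evaluation vectors (permutations of $\F_q$) into $q!/(q(q-1)) = (q-2)!$ equivalence classes.

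For the bad count, the key tool is Lemma~\ref{lem:eqiii}, which gives an \emph{exact} characterization: $\text{RS}_{q,2}(\balpha)$ cannot correct a single insdel error if and only if $\balpha$ is equivalent to one of the vectors in families (1), (2), or (3) of that lemma. Hence every bad equivalence class contains at least one representative from those families, and the number of bad classes is bounded above by the total number of such representatives. Counting gives $\phi(q-1)$ vectors of type (1) (one per primitive element of $\F_q$), the same for type (2), and exactly one vector of type (3) when $q$ is prime (none otherwise). The claimed lower bound then follows by subtracting $2\phi(q-1) + 1$ (resp.\ $2\phi(q-1)$) from $(q-2)!$, depending on whether $q$ is prime or not.

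The argument is quite short, and no serious obstacle arises. I would \emph{not} attempt to compute the exact number of bad equivalence classes, which would require checking when different choices of primitive element, or vectors from different families among (1)--(3), happen to coincide under $AGL(\F_q)$-equivalence; this refinement is unnecessary since the proposition asks only for a lower bound.
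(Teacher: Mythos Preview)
Your proposal is correct and essentially identical to the paper's proof: both count the $(q-2)!$ total equivalence classes, invoke Lemma~\ref{lem:eqiii} to list the bad representatives, and subtract $2\phi(q-1)$ (plus~$1$ if $q$ is prime) without attempting to resolve coincidences among the listed vectors. Your write-up is in fact slightly more explicit about why the orbit count equals $(q-2)!$.
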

\begin{proof}
From Lemma~\ref{lem:eqiii} we know the vectors that give rise to codes with insdel distance 2. Therefore, for every primitive element of $\F_q$ we have 2 equivalence classes, and if $q$ is prime we also have the equivalence class represented by $(0,1,2,\dots,q-1)$. In total there are $(q-2)!$ equivalence classes, which gives the statement of the lemma.
\end{proof}

Note that the lower bound in Proposition~\ref{prop:phiphi} does not account for potential equivalences among the vectors (1) -- (3) from Lemma~\ref{lem:eqiii}. As a result, we may be overcounting the number of \emph{bad} evaluation vectors, leading to a lower bound which is looser than what could possibly be obtained through a more refined analysis.

Some explicit evaluations of the bound in Proposition~\ref{prop:phiphi} are given in Table~\ref{tabli}. We scale the lower bound relative to the total number of (inequivalent) orderings, which is $(q-2)!$.

\begin{table}[h!]
\centering
\caption{Proportion of 2-dimensional full-length RS codes correcting at least a single insdel error within the set of all codes for various values of prime powers $q$. Note that when the lower bound is positive, this shows the existence of a  RS single insdel correcting code.}
\begin{tabular}{|c|c|c|c|c|c|c|c|} 
 \hline
$q$ & 4 & 5 & 7 & 8 & 9 & 11 & 13 \\ \hline
proportion & 0.000 & 0.333 & 0.967 & 0.983 & 0.998 & 0.999 & 0.999
  \\ \hline
    \noalign{\global\arrayrulewidth0.4pt}
\end{tabular}
\label{tabli}
\centering
\end{table}

\subsection{Most 2-dimensional RS Codes can Correct any Fraction of Insdel Errors}

In this subsection, we prove Theorem~\ref{thm:2-dim}.
First, we give a lower bound on the number of $2$-dimensional full-length Reed-Solomon codes over $\F_q$ that are able to correct $q-\ell$ insdel errors. We introduce the following notation and then state the proposition.

\begin{notation}
    For positive integers $n$ and $\ell$ and a sequence $I=(I_1,\dots,I_{\ell}) \subseteq [n]^{\ell}$, we will denote by $\supp(I)$ the set made of the entries of $I$, i.e., $\supp(I) :=\{I_1,\dots,I_{\ell}\} \subseteq [n]$.
\end{notation}

\begin{proposition} \label{prop:2dimmm}
There are at least 
\begin{align*}
    q!-\sum_{s=\ell+1}^{\min\{2\ell,q\}}\binom{q}{s}\binom{s}{\ell}^2(q-s)!(q-1)q\prod_{i=0}^{s - \ell - 1} (q - i)
\end{align*}
orderings $\balpha= (\alpha_1,\dots,\alpha_q)$ of $\F_q$ such that the code $\mC:=\textup{RS}_{q,2}(\balpha)$ satisfies $\textup{LCS}(\mC) \le \ell-1$, i.e., it can correct at least $q-\ell$ insdel errors.
\end{proposition}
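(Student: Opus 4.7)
The plan is to upper bound the number of bad orderings $\balpha$---those for which $\textup{LCS}(\mC) \ge \ell$---and subtract from $q!$. By Lemma~\ref{lemxing}, $\balpha$ is bad if and only if there exist $f \in AGL(\F_q)$ and increasing sequences $I, J \in [q]^\ell$ with $d_H(I, J) \ge \ell - 1$ satisfying $f(\balpha_I) = \balpha_J$. I would apply a union bound over these triples:
\[
\#\{\text{bad } \balpha\} \ \le\ \sum_{(I, J)} \sum_{f \in AGL(\F_q)} \#\{\balpha\text{ ordering of } \F_q : f(\balpha_I) = \balpha_J\}.
\]

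Next, I would organize the outer sum by $s := |\supp(I) \cup \supp(J)|$. Clearly $s \le \min(2\ell, q)$, and $s = \ell$ would force $\supp(I) = \supp(J)$ and hence $I = J$ (since both are increasing with equal support), contradicting $d_H(I, J) \ge \ell - 1$ for $\ell \ge 2$. So $s$ ranges over $\{\ell + 1, \ldots, \min(2\ell, q)\}$. For each $s$, I choose $S \subseteq [q]$ of size $s$ in $\binom{q}{s}$ ways, then pick $\supp(I), \supp(J) \subseteq S$ of size $\ell$ in at most $\binom{s}{\ell}^2$ ways (over-counting pairs whose support-union lies strictly inside $S$ only weakens the upper bound). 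The sequences $I, J$ are uniquely determined by their supports. For each fixed $(S, I, J)$, I would parameterize a valid $\balpha$ by choosing $f$ in $q(q-1) = |AGL(\F_q)|$ ways, picking distinct values for $s - \ell$ designated ``anchor'' positions in $S$ in $\prod_{i=0}^{s-\ell-1}(q - i)$ ways, propagating via $f$ to determine the remaining $\ell$ positions of $S$, and finally filling the $q - s$ positions outside $S$ with the leftover elements of $\F_q$ in $(q - s)!$ ways. Multiplying the factors and summing over $s$ produces exactly the sum stated in the proposition.

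The principal obstacle is justifying that this ``anchor-plus-propagation'' parameterization really does cover every valid $\balpha$, with only $\prod_{i=0}^{s-\ell-1}(q-i)$ distinct-value choices per affine $f$. This rests on a structural property of the constraint graph $G_{I,J}$ on vertex set $S$ with undirected edges $\{I_i, J_i\}$ for $i = 1, \ldots, \ell$. Because $I$ and $J$ are \emph{simultaneously} strictly increasing, $G_{I,J}$ admits no cycle of length $\ge 2$: any such cycle would force contradictory chains of inequalities among the relevant $I_i$'s and among the $J_i$'s. Moreover, self-loops correspond to indices $i$ with $I_i = J_i$, and the assumption $d_H(I, J) \ge \ell - 1$ leaves at most one such index. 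Hence $G_{I,J}$ is a forest carrying at most a single self-loop; its $s$ vertices split into $s - \ell$ components in the pure-forest case, or $s - \ell + 1$ components when a self-loop is present---in the latter case the loop pins its vertex's value to a fixed point of $f$, and $f$ has at most one fixed point unless it is the identity, which in turn forces propagated values in the tree components to collide and kill the count altogether. A short case analysis then shows the number of distinct valid assignments to $S$ is at most $\prod_{i=0}^{s-\ell-1}(q-i)$ per $f$, which combined with the $(q-s)!$ and $q(q-1)$ factors gives the per-triple bound and completes the argument.
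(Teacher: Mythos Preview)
Your proposal is correct and follows the same overall skeleton as the paper: union-bound over pairs $(I,J)$ and affine maps $f$, stratify by $s=|\supp(I)\cup\supp(J)|$, count at most $\binom{q}{s}\binom{s}{\ell}^2$ pairs per stratum, and show that for each fixed $(I,J,f)$ the values $\balpha_S$ are determined by $s-\ell$ free choices, leaving $(q-s)!$ ways to complete $\balpha$. The paper starts from Lemma~\ref{lem:alg-cond} rather than Lemma~\ref{lemxing}, but for $k=2$ these reduce to the same affine constraint $a\alpha_{I_t}+b=\alpha_{J_t}$.

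The only substantive difference is in how the ``$s-\ell$ degrees of freedom'' step is argued. The paper (Claim~\ref{cl:unique}) proceeds iteratively: it fixes $a,b$ and the values on $\supp(I)\setminus\supp(J)$, then repeatedly applies $f$, verifying at each stage that some newly determined index lies back in $\supp(I)$ so the process continues; the cases $d_H(I,J)=\ell$ and $d_H(I,J)=\ell-1$ are treated separately. Your graph-theoretic packaging is tidier: since the $I_i$ are distinct and the $J_i$ are distinct, every vertex of $G_{I,J}$ has in- and out-degree at most one, so the components are directed paths or self-loops, and an undirected cycle of length $\ge 2$ would have to be a directed cycle, which the ``order-preserving cyclic shift on a finite chain must be the identity'' observation rules out. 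The component count then yields the free anchors directly, and the self-loop case (at most one, isolated, pinned to the unique fixed point of a non-identity $f$) falls out uniformly rather than as a second case. Both routes produce exactly the same bound; yours makes the combinatorics more transparent, while the paper's iterative argument is closer to an explicit algorithm for recovering $\balpha_S$.
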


\begin{proof}
     We recall that by Lemma~\ref{lem:alg-cond}, if $\mC:= \textup{RS}_{q,2}(\balpha)$ cannot correct $q-\ell$ insdel errors, then there are two increasing sequences, $I,J\in [q]^{\ell}$ with $d_H(I,J)\geq \ell-1$ such that $\textup{rank} (V_{2,\ell,I,J} (\balpha)) < 3$. Thus, in this proof, we will give an upper bound on the number of distinct orderings $\balpha$ of $\Fq$ for which this (`bad') condition holds.

    Fix $I,J\in[q]^{\ell}$. If $\textup{rank} (V_{2,\ell,I,J} (\balpha)) < 3$, then there exists a nonzero vector $(a,b,-c)\in \Fq^3$ such that $V_{2,\ell,I,J} (\balpha) \cdot (a,b,-c)^T = \bzero$. 
    This implies the following system of $\ell$ equations
    \[
        a \alpha_{I_t} + b = c\alpha_{J_t} \quad \text{for } t = 1, \dots, \ell
    \]
    
    Now, since elements $\alpha_{J_t}, t\in[\ell]$ are pairwise distinct, we can assume that $c\neq 0$ as otherwise, we would get that $a=0$ (since $\alpha_{J_t}, t\in[\ell]$ are also pairwise distinct) and $b=0$.
    Thus, we can assume that $c=1$ and get the following system of $\ell$ equations
    \begin{align} \label{eq:syss}
        a \alpha_{I_t} + b = \alpha_{J_t} \quad \text{for } t \in [\ell]
    \end{align}
    for some $a \in \F_q^*$ and $b \in \F_q$. 
    Let $S:=\supp(I)\cup \supp(J)$ and denote $s := |S|$. Then, the system~\eqref{eq:syss} consists of $\ell$ equations and $s + 2$ unknowns: $\alpha_i$ for $i\in S$ and $a,b$.

    We now give an upper bound for the number of orderings $\balpha$ for which such a system is satisfied. We will need the following claim, that we prove within this proof.

    \begin{claim} \label{cl:unique}
        We can choose $s +2 - \ell$ out of the $s+2$ variables $\{a,b\} \cup \{\alpha_i \mid i\in S\}$ such that by assigning values to these variables, the remaining $\ell$ variables in the system~\eqref{eq:syss} are uniquely determined, if a solution exists.
    \end{claim}

\begin{proof}[Proof of the claim] 
    Note that $|\supp(I) \cap \supp(J)|=2\ell-s$. Hence if $\supp(I) \cap \supp(J) = \emptyset$, then $s = 2\ell$. By fixing $a \in \mathbb{F}_q^*$, $b \in \mathbb{F}_q$, and assigning distinct values from $\mathbb{F}_q$ to $\alpha_i$ with $i \in \supp(I)$, we fully chose all the entries of $\balpha_I$ in this case. Clearly, there are $q\cdot (q-1)\cdot \prod_{i=0}^{\ell - 1} (q - i)$ ways to assign values to these variables.
    Applying the affine transformation $f(x) := ax + b$ to these values then yields all the entries of $\balpha_J$.
    
    Now assume that $\supp(I) \cap \supp(J) \ne \emptyset$ and let $(I_{U_1},\dots,I_{U_u})$ be the increasing sequence made from the elements in $\supp(I) \setminus \supp(J)$ where $u:=s-\ell$. 
    Recall that by Lemma~\ref{lem:alg-cond}, $d_H(I,J)\geq \ell - 1$ and thus, $d_H(I,J)\in \{\ell-1,\ell\}$.

    We begin with the case where $d_H(I,J)=\ell$. Fix values for $s + 2 - \ell$ of the variables in Equation~\ref{eq:syss} in the following way: fix $a \in \F_q^{*}$, $b \in \F_q$ and fix values for $\alpha_i$ for $i \in \supp(I) \setminus \supp(J)$, assigning them distinct elements from~$\F_q$. There are $(q-1)\cdot q\cdot \prod_{i=0}^{s - \ell - 1} (q - i)$ ways to assign values to these variables. Note that in this way, we fixed the values of the variables in $\{\alpha_{I_{U_1}},\dots, \alpha_{I_{U_u}}\}$. 

    Applying~$f(x)=ax+b$ to $\alpha_{I_i}$, $i \in U$, we obtain $$f(\alpha_{I_{U_1}})=\alpha_{J_{U_1}},\dots,f(\alpha_{I_{U_u}})=\alpha_{J_{U_u}}$$ where $(J_{U_1},\dots,J_{U_u})$ is an increasing sequence made from elements in $\supp(J)$.
    Note that there exists some $R_1 \in \{{U_1},\dots,{U_u}\}$ with $J_{R_1} \in \supp(I)$. 
    If this was not the case, then this would mean $\{J_{U_1},\dots,J_{U_u}\} = \supp(J) \setminus \supp(I)$, implying that for the increasing sequence indexed by $M=(M_1,\dots,M_m)$ made from the elements in $\supp(I) \cap \supp(J)$ we have $(I_{M_1},\dots,I_{M_{m}})=(J_{M_1},\dots,J_{M_{m}})$ where $m:=2\ell-s$. This contradicts the fact that $d_H(I,J) = \ell$. 

    Therefore, for this $R_1$, $f(\alpha_{J_{R_1}}) = \alpha_{J_{\tilde{R}_1}}$ for some $J_{\tilde{R}_1} \in \supp(J) \setminus \{J_{U_1},\dots,J_{U_u}\}$, and $J_{R_1} = I_{\tilde{R}_1}$. Now, on top of the (initially chosen) values for $\{\alpha_{I_{U_1}},\dots, \alpha_{I_{U_u}}\}$, we have found the values for $\{\alpha_{J_{U_1}},\dots, \alpha_{J_{U_u}},\alpha_{J_{\tilde{R}_1}}\}$ (which is not necessarily a disjoint set from $\{\alpha_{I_{U_1}},\dots, \alpha_{I_{U_u}}\}$). 

    Now, similarly to before, suppose that $\{J_{U_1},\dots,J_{U_u}, {J_{\tilde{R}_1}}\}\setminus \{J_{R_1}\} = \supp(J) \setminus \supp(I)$. If $|\supp(I) \cap \supp(J)|=1$, we are done. If not, then for the increasing sequence indexed by $M=(M_1,\dots,M_m)$ made from the elements in $(\supp(I) \cap \supp(J)) \setminus \{J_{{R}_1}\}$ we have $(I_{M_1},\dots,I_{M_{m}})=(J_{M_1},\dots,J_{M_{m}})$ where this time $m:=2\ell-s-1$. Again this contradicts that $d_H(I,J) = \ell$.
    Therefore, there exists $R_2 \in \{{U_1},\dots,{U_u},\tilde{R}_1\}$ with $J_{R_2} \in \supp(I)$. Thus, we have found the additional entry of $\alpha$, given by $f(\alpha_{J_{R_2}}) = \alpha_{J_{\tilde{R}_2}}$.

    We can repeat the process above, following the same reasoning, until we find all $\alpha_i$ for $i \in S$ (if a solution exists).

    Now suppose that $d_H(I,J)=\ell-1$. This means that there exists $t \in [\ell]$ with $I_t=J_t$. We fix $a \in \F_q^{*}$, and we fix $\alpha_{I_t}=\alpha_{J_t} \in \F_q$. From this, we automatically find $b$ from the equation $a \alpha_{I_t}+b = \alpha_{I_t}$. Moreover, we fix values for $\alpha_i$ for $i \in \supp(I) \setminus \supp(J)$, assigning them distinct elements from~$\F_q$. There are $(q-1)\prod_{i=0}^{s - \ell} (q - i)$ ways to do this. We then look at the increasing sequences $\tilde{I}$ and $\tilde{J}$ that are obtained from $I$ and $J$, respectively, by shortening the $t$-th coordinate. This gives sequences of length $\ell-1$ with $d_H(\tilde{I},\tilde{J})=\ell-1$, i.e., $\tilde{I}$ and $\tilde{J}$ have the property that in none of their positions their entries coincide. We can then proceed as in the first part of the proof, finding all the values of $\alpha_S$ uniquely (if a solution exists).
\end{proof}

We continue with the proof of the proposition.
Given Claim~\ref{cl:unique}, for a fixed $I,J\in [q]^{\ell}$, there are at most $q\cdot (q-1)\cdot \prod_{i=0}^{s - \ell - 1} (q - i)$ vectors $\balpha_S := (\alpha_i : i\in S) \in \Fq^{|S|}$ for which there exists $a\in \Fq^*, b\in \Fq$ that form a solution to~\eqref{eq:syss}. 

Thus, by taking into consideration that the remaining $q-s$ positions of $\balpha$ (those not in $S$) can then be assigned any distinct values from the unused elements of $\F_q$, we get that there are at most
\[
    (q-s)!\cdot q\cdot (q-1)\cdot \prod_{i=0}^{s - \ell - 1} (q - i)
\]
orderings $\balpha$ of $\Fq$ for which $\textup{rank} (V_{2,\ell,I,J} (\balpha)) < 3$. 

Finally, we observe that the number of ways to choose $I,J\in [q]^{\ell}$ such that $s = |S|= |\supp(I)\cup\supp(J)|$ is at most $\binom{q}{s} \binom{s}{\ell}^2$; $\binom{q}{s}$ choices for $S$, and $\binom{s}{\ell}$ choices for $I$ and $J$ (disregarding the fact that we want $I\ne J$). Summing over all admissible values of $s$ gives that there are at most 
    \[
        \sum_{s = \ell+1}^{\min\{2\ell, q\}} \binom{q}{s} \binom{s}{\ell}^2 (q - s)! (q-1)q\prod_{i=0}^{s - \ell - 1} (q - i)
    \]
number of orderings $\balpha = (\alpha_1,\ldots,\alpha_q)$ of $\Fq$ for which there are two distinct $I,J\in [q]^{\ell}$ with $d_H(I,J)\geq \ell - 1$ such that $V_{2,\ell,I,J}(\balpha)$ does not have full rank. This completes the proof.
\end{proof}

In the following claim, we set $\ell = \delta q$ for some constant $\delta\in[0,1]$ and compute an upper bound on the fraction of bad orderings of $\Fq$, i.e., the orderings that yield a $2$-dimensional RS code that cannot correct $(1 - \delta)q$ insdel errors.


\begin{claim} \label{cl:2dimmm}
Let $0 <\delta < 1$. We have that
\begin{align*}
    \sum_{s=\delta q+1}^{\min\{2\delta q,q\}}\binom{q}{s}\binom{s}{\delta q}^2\frac{(q-s)!}{q!}(q-1)q\prod_{i=0}^{s - \delta q - 1} (q - i)\le q^2  \left( \frac{4e^2}{\delta^2 q} \right)^{\delta q}.
\end{align*}
\end{claim}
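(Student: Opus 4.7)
The plan is first to simplify the summand algebraically, then apply standard binomial estimates together with Stirling's formula, with care taken to avoid polynomial prefactors worse than $q^2$. Substituting $t := s - \delta q$, a direct computation (either via the identity $\binom{q}{s}\binom{s}{\delta q} = \binom{q}{\delta q}\binom{q-\delta q}{t}$ or by expanding factorials) yields
\[
\binom{q}{s}\binom{s}{\delta q}^2\frac{(q-s)!}{q!} \;=\; \frac{\binom{\delta q + t}{\delta q}}{(\delta q)!\,t!}.
\]
Combined with $\prod_{i=0}^{t-1}(q-i) = q!/(q-t)! = t!\,\binom{q}{t}$, the summand rewrites as $\frac{q(q-1)\,\binom{q}{t}\,\binom{\delta q + t}{\delta q}}{(\delta q)!}$, with $t$ now ranging over $\{1,\dots,\min\{\delta q,(1-\delta)q\}\}$.

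Next I would bound each binomial factor cleanly. Since $\binom{\delta q + t}{\delta q}$ is monotonically increasing in $t$, it is at most $\binom{2\delta q}{\delta q}$, for which the tight central binomial estimate $\binom{2n}{n} \le 4^n/\sqrt{\pi n}$ is sharper than $4^n$ alone. For $\binom{q}{t}$, on the range $t \le \delta q \le q/2$ the bound $\binom{q}{t} \le \binom{q}{\delta q} \le (e/\delta)^{\delta q}$ holds; the case $\delta > 1/2$ is handled symmetrically via $\binom{q}{\delta q} \le (e/(1-\delta))^{(1-\delta)q}$ with a brief case split. This bounds each term by roughly $\frac{q(q-1)(4e/\delta)^{\delta q}}{\sqrt{\pi\delta q}\,(\delta q)!}$.

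A naive ``(number of terms)$\times$(maximum)'' bound would introduce a stray factor of $\delta q$, producing a $q^3$ prefactor rather than the desired $q^2$. To eliminate this, I would use a ratio argument: the ratio of consecutive summands equals $\frac{(q-t)(\delta q + t + 1)}{(t+1)^2}$, which is monotonically decreasing in $t$ and bounded below by $2$ on the relevant range (the bound at the rightmost $t$ is $\approx 2(1-\delta)/\delta \ge 2$ for $\delta \le 1/2$, and $\approx \delta/(1-\delta)^2 \ge 2$ for $\delta \ge 1/2$), so the full sum is a geometric-type series dominated by at most twice its largest term. Finally, Stirling's bound $(\delta q)! \ge (\delta q/e)^{\delta q}$ gives $(4e/\delta)^{\delta q}/(\delta q)! \le (4e^2/(\delta^2 q))^{\delta q}$; the factor-$2$ loss from the geometric-sum estimate is absorbed by the $1/\sqrt{\pi\delta q}$ from the central binomial bound for $q$ large enough, and $q(q-1)\le q^2$ then yields the claimed inequality. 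The main obstacle is precisely this polynomial-prefactor bookkeeping, and the heart of the argument is the geometric-sum/ratio estimate that pins the $q$-exponent at $2$ rather than $3$.
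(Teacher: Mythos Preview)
Your approach is correct and close in spirit to the paper's, but the execution differs in two instructive ways.

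First, the paper replaces $\prod_{i=0}^{t-1}(q-i)$ by the cruder bound $q^t$ rather than your exact rewriting $t!\binom{q}{t}$. This makes the summand $a_t = \frac{(\delta q+t)!}{(\delta q)!^2\,t!^2}\,q^{t+2}$, whose ratio $a_{t+1}/a_t = \frac{(\delta q+t+1)q}{(t+1)^2}\ge 2/\delta$ is \emph{always} at least $2$, uniformly in $\delta$. So the paper never needs your case split on $\delta\lessgtr 1/2$; your split arises precisely because keeping the exact factor $\binom{q}{t}$ makes the maximizer shift to $t=(1-\delta)q$ when $\delta>1/2$, and your ``handled symmetrically'' remark then hides a small calculus check.

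Second, your worry about a $q^3$ prefactor is unfounded, and the geometric-sum trick---while correct---is unnecessary. The paper \emph{does} use the naive ``number of terms $\times$ max term'' bound, picking up an extra factor $\delta q$. But its maximal term is $\delta q\cdot\frac{(2\delta q)!}{(\delta q)!^4}\,q^{\delta q+2}$, with $(\delta q)!^4$ (not $(\delta q)!$) in the denominator. The full Stirling bound $(\delta q)!\ge \sqrt{2\pi\delta q}\,(\delta q/e)^{\delta q}$ then contributes a prefactor $\sim 1/(\delta q)^{3/2}$, which swallows the stray $\delta q$ and lands cleanly at $q^2$. In your version you only invoke the weak Stirling bound $(\delta q)!\ge(\delta q/e)^{\delta q}$; had you kept the $\sqrt{2\pi\delta q}$ prefactor there as well, the naive bound would already give $q^2$ and the ratio argument could be dropped.
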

\begin{proof}
We have
\begin{align*}
    \sum_{s=\delta q+1}^{\min\{2\delta q,q\}}\binom{q}{s}\binom{s}{\delta q}^2\frac{(q-s)!}{q!}(q-1)q\prod_{i=0}^{s - \delta q - 1} (q - i)
    &\le \sum_{s=\delta q+1}^{2\delta q}\frac{s!}{(\delta q)!^2(s-\delta q)!^2}q^{s-\delta q+2} \\
    &=\sum_{s=1}^{\delta q}\frac{(s+\delta q)!}{(\delta q)!^2s!^2}q^{s+2}
\end{align*}
where the inequality comes from the fact that $\min\{2\delta q, q\} \le 2\delta q$ and 
\begin{align*}
    (q-1)q\prod_{i=0}^{s - \delta q - 1} (q - i) \le q^{s-\delta q+2}.
\end{align*}
For $1 \le s \le \delta q$ let
\begin{align*}
    a_s := \frac{(s+\delta q)!}{(\delta q)!^2s!^2}q^{s+2}.
\end{align*}
Simple simplifications give that for any $1 \le s \le \delta q-1$ we have 
\begin{align*}
    \frac{a_{s+1}}{a_{s}} = \frac{(s+\delta q+1)q}{(s+1)^2}= \frac{q}{s+1} +\frac{\delta q^2}{(s+1)^2} \ge \frac{q}{\delta q }+\frac{\delta q^2}{\delta^2 q^2} = \frac{2}{\delta} > 1.
\end{align*}
Thus, $a_s$ is increasing in $s$, and so the maximal term in the sum is $a_{s}$ for $s=\delta q$. We therefore get the estimate
\begin{align*}
    \sum_{s=1}^{\delta q}\frac{(s+\delta q)!}{(\delta q)!^2s!^2}q^{s+2} \le \delta q \frac{(2\delta q)!}{(\delta q)!^2(\delta q)!^2}q^{\delta q+2} = \delta q \frac{(2\delta q)!}{(\delta q)!^4}q^{\delta q+2}.
\end{align*}

Using Strirling's approximation~\cite[Lemma 7.3]{MU} which states that for any integer $m$, we have that
\[
    \sqrt{2\pi m} \left(\frac{m}{e}\right)^m \leq m! \leq 2\sqrt{2\pi m} \left(\frac{m}{e}\right)^m \;,
\]
we get
\begin{align*}
    (2\delta q)! \leq 4 \sqrt{\pi \delta q}\left(\frac{2\delta q}{e}\right)^{2\delta q} \quad \textup{and} \quad
    (\delta q)!^4 \geq 4\pi^2 \delta^2 q^2 \left(\frac{\delta q}{e}\right)^{4\delta q}\;.
\end{align*}
Therefore,
\begin{align*}
    \delta q \frac{(2\delta q)!}{(\delta q)!^4}q^{\delta q+2} &\leq \delta q \frac{\sqrt{\pi \delta q}\left(\frac{2\delta q}{e}\right)^{2\delta q}}{\pi^2 \delta^2 q^2 \left(\frac{\delta q}{e}\right)^{4\delta q}}  q^{\delta q + 2} 
    =\frac{1}{\sqrt{\pi^3\delta q}} \left( \frac{2e}{\delta q} \right)^{2\delta q}  q^{\delta q + 2} \leq q^2  \left( \frac{4e^2}{\delta^2 q} \right)^{\delta q}\;
\end{align*} 
which proves the claim.
\end{proof}

Combining Proposition~\ref{prop:2dimmm} with the upper bound given in Claim~\ref{cl:2dimmm}, we prove Theorem~\ref{thm:2-dim}, which is restated for convenience. 

\twodim*

\begin{proof}
    By Proposition~\ref{prop:2dimmm}, at least 
    \begin{equation} \label{eq:num-of-good-ord}
        q!-\sum_{s=\ell+1}^{\min\{2\ell,q\}}\binom{q}{s}\binom{s}{\ell}^2(q-s)!(q-1)q\prod_{i=0}^{s - \ell - 1} (q - i)
    \end{equation}
    of orderings $\balpha$ of $\Fq$, yield a $2$-dimensional RS code that can correct $q-\ell$ insdel errors.

    Now, set $\ell = \delta q$. By Claim~\ref{cl:2dimmm}, we know that~\eqref{eq:num-of-good-ord} is at least
    \[
    \left( 1 - q^2 \left( \frac{4e^2}{\delta^2 q} \right)^{\delta q} \right) q!\;.
    \]
    Note that for every $q>8e^2/\delta^2$, we have that $q^2 \left( \frac{4e^2}{\delta^2 q} \right)^{\delta q} \leq 2^{-\delta q + 2\log q}\leq \exp(-\Omega(\delta q))$. Thus, for every $\varepsilon$, there exists a large enough prime power $q$ such that $\exp(-\Omega(\delta q))<\varepsilon$. The theorem follows.
\end{proof}

Note that the previous result shows that for any $0< \delta < 1$, if one picks uniformly at random a full-length 2-dimensional Reed-Solomon code over $\F_q$ 
from the set of all possible such codes, then with high probability this code will be able to correct $(1-\delta)q$ insdel errors, as long as $q$ is large enough.

\subsection{General $k$: Existence of Full-length RS Coded Correcting Insdel Errors.}
Our next goal is to use the power of randomness and show that there is a positive probability that a random permutation of $\F_q$ (when $q$ is large enough compared to~$k$) will give rise to a RS code that can correct many insdel errors. 
In the following, we prove Theorem~\ref{thm:main} which is restated for convenience.
\existperm*
We will need the following claim in the proof.

\begin{claim} \label{clm:distinct-alphas} 
    Let $\balpha = (\alpha_1, \ldots, \alpha_q) \in \F_q^q$ be a uniformly random vector. Then, the probability that all distinct $i,j \in [q]$, we have $\alpha_i \ne \alpha_j$ is at least $e^{-q}$.
\end{claim}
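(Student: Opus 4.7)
The plan is to recognize that the event described, namely that all coordinates of the uniformly random vector $\balpha \in \F_q^q$ are pairwise distinct, is exactly the event that $\balpha$ is a permutation of $\F_q$. Hence I would directly compute the probability as
\[
\Pr[\alpha_i \neq \alpha_j \text{ for all } i \neq j] \;=\; \frac{q!}{q^q},
\]
since there are $q!$ permutations of $\F_q$ among the $q^q$ total vectors in $\F_q^q$.

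Next, I would bound $q!$ from below using the same Stirling estimate already invoked in the proof of Claim~\ref{cl:2dimmm}, namely $q! \geq \sqrt{2\pi q}\,(q/e)^q$. Dividing by $q^q$ yields
\[
\frac{q!}{q^q} \;\geq\; \frac{\sqrt{2\pi q}}{e^q} \;\geq\; e^{-q},
\]
where the final inequality uses the trivial fact that $\sqrt{2\pi q} \geq 1$ for all $q \geq 1$. This closes the argument.

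There is no real obstacle here: the only subtlety is identifying the event with being a permutation and then invoking Stirling consistently with the form already used elsewhere in the paper. I would keep the write-up to two or three lines so that the claim slots cleanly into the subsequent proof of Theorem~\ref{thm:main}, where it will presumably be combined with a union bound over ``bad'' configurations (pairs of increasing index sequences $I,J$ violating the rank condition of Lemma~\ref{lem:alg-cond}) to conclude that a uniformly random element of $\F_q^q$, \emph{conditioned} on being a permutation, still avoids all bad configurations with positive probability, provided $q \geq e^{6k} \cdot (10 e k^3)$.
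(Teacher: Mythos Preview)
Your proposal is correct and matches the paper's proof essentially line for line: identify the event as $\balpha$ being a permutation, write the probability as $q!/q^q$, and apply the Stirling lower bound $q! \geq \sqrt{2\pi q}\,(q/e)^q$ to conclude $q!/q^q \geq e^{-q}$.
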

\begin{proof}
    The number of permutations on $q$ elements is $q!$. Thus, this probability is exactly $q!/q^q$. By Stirling's approximation (e.g., \cite[Lemma 7.3]{MU}), we get 
    \[
    \frac{q!}{q^q} \geq \frac{\sqrt{8\pi q} \left(\frac{q}{e}\right)^q}{q^q} \geq e^{-q}\;. \qedhere
    \]
\end{proof}
\begin{proof}[Proof of Theorem~\ref{thm:main}]
    Let $\balpha = (\alpha_1, \ldots, \alpha_q) \in \F_q^q$ be a uniform random vector. Note that it can be that $\balpha$ is not a permutation of $\Fq$ and we will address this issue at the end of the proof. Set $\ell = q - q/(16 k)$. For two increasing sequences $I,J \subset [q]^{\ell}$ that agree on at most $k-1$ coordinates, we will give an upper bound on the probability that $V_{k,\ell,I,J}(\balpha)$ is not of full rank.

    For this purpose, consider the matrix $V_{k,\ell,I,J}(\bX)$ from \eqref{eq:mat-lcs-eq} where $\bX = (X_1, \ldots, X_n)$ are formal variables. Consider the matrix $M(\bX)$ obtained by taking the top $(2k - 1)\times (2k-1)$ submatrix of $V_{k,\ell,I,J}(\bX)$. Its determinant is a non-zero polynomial \cite[Proposition 18]{con2023reed} of degree less than $k^2$. Thus, by the Schwartz-Zippel Lemma \cite{Zippel79,Schwartz80}, $\Pr[\det(M(\balpha)) = 0] \leq k^2/q$ where the probability is over all $\balpha \in \F_q^q$. 

    We shall construct a sequence $M_1(\bX), \ldots, M_{m}(\bX)$ of $(2k-1) \times (2k - 1)$ submatrices of $V_{k,\ell,I,J}(\bX)$. Denote by $\textup{Vars}(M_j(\bX)) = \{ i\in [n] : X_i \textup{ appears in } M_j(\bX)\}$.
    We require our sequence of submatrices to comply with the following condition: for all distinct $i,j\in [m]$ it holds that
    \[
    \textup{Vars}(M_i(\bX)) \cap \textup{Vars}(M_j(\bX)) = \emptyset \;.
    \]
    In other words, this condition ensures that for any distinct $i,j\in [m]$, the set of variables that appear in $M_i$ is disjoint from the set of variables that appear in $M_j$.
    This implies that the event $\det(M_i(\balpha)) = 0$ is independent of all the events $\{\det(M_j(\balpha))=0:  j\in [m]\setminus \{i\} \}$ for all $i \in [m]$. 
    Therefore, 
    \begin{align*}
        &\Pr_{\balpha\sim \F_q^q}[\textup{rank}(V_{k,\ell, I,J}(\balpha)) < 2k-1] \\
        &\;\;\;  \leq \Pr_{\balpha\sim \F_q^q}[\forall i\in [m],\; \det(M_i(\balpha)) = 0] \\
        &\;\;\; \leq \prod_{i\in [m]} \Pr_{\balpha\sim \F_q^q} [\det(M_i(\balpha) = 0)]  \leq \left( \frac{k^2}{q} \right)^m \;,
    \end{align*}
    where the first inequality is by observing that if $\textup{rank}(V_{k,\ell, I,J}(\balpha))(\balpha) < 2k-1$ then each $(2k-1)\times (2k-1)$ submatrix must be singular and because of the independence argument above, the probabilities for each matrix can be multiplied.

    Next, we claim that $m$ is at least $\ell/4k$. Choose $M_1$ by picking the first $2k-1$ rows out of the $\ell$ rows of $V_{k,\ell,I,J}(\bX)$. Now, there are $\ell -(2k - 1)$ rows from which we can choose $M_2$. 
    Our next observation is that there are at most $2k-1$ rows out of these $\ell -(2k - 1)$ rows that contain a variable which is also in $\textup{Vars}(M_1(\bX))$. Indeed, consider the $s$-th row in $M_1(\bX)$ which contains the variable $X_i$ and $X_j$ and assume w.l.o.g. that $i \leq j$. For every $s'>s$, we cannot have that the $s'$-th row in $V_{k,\ell,I,J}(\bX)$ will contain the variable $X_i$ since it would contradict the fact that $I,J$ are increasing sequences.
    Therefore, for any row that was added to $M_1(\bX)$, only (at most) one row cannot be added to $M_2(\bX)$. Thus, there are $\ell - (4k - 2)$ rows to choose from when constructing $M_2(\bX)$. Choose the first $2k-1$ rows and continue this way.
    We conclude that the number of matrices $m$ is at least $\ell/4k$.

    Thus, for fixed increasing sequences $I,J$ of length $\ell$ that agree on at most $k-1$ coordinates, the probability that $V_{k,\ell,I,J}(\balpha)$ is not full rank is at most 
    \[
    \left( \frac{k^2}{q} \right)^{\ell/(4k)}\;.
    \]
    By the union bound, the probability that there exists a pair $I,J$ of increasing sequences of length $\ell$ that agree on at most $k-1$ coordinates for which the rank of $V_{k,\ell,I,J}(\balpha)$ is not full, is at most
    \begin{align*}
        \left( \frac{k^2}{q} \right)^{\ell/(4k)} \binom{q}{\ell}^2 
        &\leq \left( \frac{k^2}{q} \right)^{q/(5k)} \left(\frac{e q}{\frac{q}{10k}}\right)^{q/(5k)} \\
        & = \left( \frac{10ek^3}{q} \right)^{q/(5k)} \leq e^{-\frac{6}{5} q},
    \end{align*}
    where the first inequality is by our requirement that the code can correct $q/(10k)$ insdel errors, which implies that $\ell = q - q/(10k)$ and it holds that $\ell/(4k) > q/(5k)$. Furthermore, we used the inequality $\binom{n}{k} < (en/k)^k$. 
    The final inequality follows by our assumption that $q \geq e^{6k} \cdot (10ek^3)$.
    
    Finally, observe that in order for our randomly chosen $\balpha$ to define a proper RS code, it must be that all points in $\balpha$ are distinct. 
    From Claim~\ref{clm:distinct-alphas} we know that the probability that this happens is at least $e^{-q}$. 
    Thus, a random vector $\balpha\in \Fq^q$ does not give rise to an RS code that can correct $q/(10k)$ insdel errors if $\balpha$ is not a permutation or it is a permutation but there exists a pair of increasing sequences $I,J\subset[n]^{q - q/(10k)}$ that agree on at most $k-1$ coordinates such that $\det(V_{k, q - q/(10k),I,J}(\balpha)) = 0$.
    This occurs with probability at most $(1 - e^{-q}) + e^{-6q/5} < 1$. 
    We conclude that there exists an $\balpha \in \Fq^n$ such that the respective $\RSfull$ can correct $q/(10k)$ insdel errors. 
\end{proof}

\section{Rate-$1/2$ RS codes correcting a single insdel error.} \label{sec:ex}

In this section, we give new existence results for optimal RS codes. Our approach is based on the following (easy) lemma, which can be seen as a generalization of~\cite[Lemma 4.10]{Xing} for general $k$.

\begin{lemma} \label{lem:iff}
A $\RSS$ code is optimal if and only if $f(\balpha_I) \ne {g}(\balpha_J)$ for any $f \in \{f_{k-1}x^{k-1}+\dots +f_1x : f_i \in \F_q \textnormal{ for all $i \in \{1,\dots,k-2\}$}, f_{k-1} \in \{0,1\}\}$ and $g \in \F_q[x]_{< k}$, with $f\ne g$ and increasing sequences $I, J \in [n]^{2k-1}$.
\end{lemma}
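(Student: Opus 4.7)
The plan is to first reinterpret optimality as a condition about equality of polynomial evaluations along increasing index sequences, and then use two simple normalizations of $f$---subtracting its constant term and scaling by the inverse of its leading coefficient---to reduce the general condition to the restricted class of $f$ described in the lemma.

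First, I would unpack optimality. By Theorem~\ref{thm:rsopt} together with Lemma~\ref{lem:code-lcs}, the code $\RSS$ is optimal if and only if $\textup{LCS}(\mC) \leq 2k - 2$, i.e., no two distinct codewords share a common subsequence of length $2k-1$. Since every codeword has the form $f(\balpha)$ for a unique $f \in \F_q[x]_{<k}$, the existence of a common subsequence of length $2k-1$ between two distinct codewords $f(\balpha)$ and $g(\balpha)$ is equivalent to the existence of increasing sequences $I, J \in [n]^{2k-1}$ with $f(\balpha_I) = g(\balpha_J)$. Hence optimality is equivalent to the statement that for all distinct $f, g \in \F_q[x]_{<k}$ and all increasing $I, J \in [n]^{2k-1}$, $f(\balpha_I) \neq g(\balpha_J)$. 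The forward direction of the lemma is then immediate, since the class of $f$ in the statement is a subset of $\F_q[x]_{<k}$.

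For the converse I would argue by contraposition: assuming the code is not optimal, I produce a violating pair with $\tilde f$ in the restricted class. Start from any distinct $f,g\in\F_q[x]_{<k}$ and increasing $I,J\in [n]^{2k-1}$ with $f(\balpha_I) = g(\balpha_J)$, and write $f = \sum_{i=0}^{k-1} f_i x^i$. The key observation is that the equality $f(\balpha_I) = g(\balpha_J)$ is preserved if we replace $(f,g)$ by $(f - c, g - c)$ for any $c \in \F_q$ (this is the polynomial analogue of \eqref{eq:isometrie2}) and also by $(\lambda f, \lambda g)$ for any $\lambda \in \F_q^*$ (analogue of \eqref{eq:isometrie1}). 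Taking $c = f_0$ eliminates the constant term of $f$; if the resulting leading coefficient $f_{k-1}$ is zero, then $f - f_0$ already lies in the restricted class (with $f_{k-1} = 0$) and we set $\tilde f := f - f_0$, $\tilde g := g - f_0$. Otherwise, multiply through by $\lambda = f_{k-1}^{-1}$ and set $\tilde f := \lambda (f - f_0)$, $\tilde g := \lambda(g - f_0)$, so that $\tilde f$ has no constant term and leading coefficient $1$, and $\tilde g \in \F_q[x]_{<k}$.

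The only remaining point, and the one to be careful about, is that the distinctness $\tilde f \neq \tilde g$ survives both normalizations. This is clear because both operations are invertible affine maps on $\F_q[x]_{<k}$ applied identically to $f$ and $g$; so $\tilde f = \tilde g$ would force $f = g$, contradicting our assumption. Hence the restricted condition fails as well, which is the contrapositive we wanted. I do not expect any real obstacle here: the proof is essentially a bookkeeping exercise that leverages the shift- and scale-invariance of subsequence equality, and the only delicate verification is ensuring that $\tilde f \neq \tilde g$ is preserved through the reduction.
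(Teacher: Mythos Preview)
Your proposal is correct and follows essentially the same approach as the paper's proof: both argue via contraposition, first observing that $f(\balpha_I)=g(\balpha_J)$ for arbitrary distinct $f,g\in\F_q[x]_{<k}$ characterizes failure of optimality, and then reducing to the restricted class by subtracting the constant term of $f$ and, if necessary, scaling by the inverse of its leading coefficient (exactly the uses of \eqref{eq:isometrie1} and \eqref{eq:isometrie2} in the paper). Your explicit verification that $\tilde f\neq\tilde g$ survives the normalization is a small clarification the paper leaves implicit.
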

\begin{proof}
Suppose there exist $f(x) \in \{f_{k-1}x^{k-1}+\dots +f_1x : f_i \in \F_q \textnormal{ for all $i \in \{1,\dots,k-2\}$},$ $f_{k-1} \in \{0,1\}\}$ and $g(x) \in \F_q[x]_{< k}$ with $f \ne g$ and increasing vectors $I, J \in [n]^{2k-1}$ such that $I^{f}_\balpha = J^{g}_\balpha$. Since $f \ne g$, $f(\balpha)\ne g(\balpha)$ and $\textup{LCS}(f(\balpha),g(\balpha)) \ge 2k-1$ because they at least share the subsequence indexed  by $I$ and $J$, respectively. Thus the code is not optimal.

Now suppose that the code $\RSS$ satisfies $\textup{LCS}(\RSS) \ge 2k-1$. Then there exist two distinct codewords $\bc,\bc' \in \textup{RS}_{n,k}(\balpha)$
with $\textup{LCS}(\bc,\bc') \ge 2k-1$. W.l.o.g. we assume $\textup{LCS}(\bc,\bc') = 2k-1$. We can write $\bc = f(\balpha)$ and $\bc' = g(\balpha)$ for some distinct $f, g \in \F_q[x]_{< k}$. Denote the indices of the largest common subsequence of $\bc$ and $\bc'$ (i.e. the coordinates where they coincide) by $I$ and $J$, respectively. We have $I,J \in [n]^{2k-1}$ and $f(\balpha_I) = g(\balpha_J)$. 
By the equations in~\eqref{eq:isometrie1} and~\eqref{eq:isometrie2}, we can subtract and multiply by suitable elements of~$\F_q$ to obtain $\textup{LCS}(\bc,\bc')=\textup{LCS}(f(\balpha),g(\balpha)) = \textup{LCS}(\tilde{f}(\balpha),\tilde{g}(\balpha)) \ge 2k-1$ where $\tilde{f}(x) \in \{\tilde{f}_{k-1}x^{k-1}+\dots +\tilde{f}_1x : \tilde{f}_i \in \F_q \textnormal{ for all $i \in \{1,\dots,k-2\}$}, \tilde{f}_{k-1} \in \{0,1\}\}$ and $\tilde{g} \in \F_q[x]_{<k}$.
\end{proof}

\subsection{The Dimension $2$ Case} \label{subsec:2}
Already for $k=2$ it is an open problem to determine the smallest $q$ for which there exists an optimal RS code $\text{RS}_{n,2}(\balpha)$.  
In this subsection we  prove that $\mathbb{F}_7$ is the smallest finite field for which optimal RS codes with these parameters exist. In Subsection~\ref{subsec:k}, we build on the result of this subsection and establish the existence of optimal RS codes with rate $1/2$ through induction on $k \geq 2$.  

%
%

\begin{lemma} \label{lem:k2}
 The smallest $q$ for which there exists an optimal $\text{RS}_{4,2}(\balpha)$ code over $\mathbb{F}_q$ is $q=7$.  In this case, a possible evaluation vector is $\balpha=(0,1,2,5) \in \F_7^4$.
\end{lemma}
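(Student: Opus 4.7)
The plan is to prove two claims: (a) for every prime power $q < 7$, no optimal $\text{RS}_{4,2}(\balpha)$ code exists; and (b) $\balpha = (0,1,2,5) \in \F_7^4$ gives such an optimal code over $\F_7$. In both parts the main tool is Lemma~\ref{lemxing}, which reduces optimality of $\text{RS}_{4,2}(\balpha)$ to the condition that no affine map $f_{a,b} \in AGL(\F_q)$ satisfies $f_{a,b}(\balpha_I) = \balpha_J$ for any pair of increasing sequences $I,J \in [4]^3$ with $d_H(I,J) \geq 2$. Because $n = 4$, each such $I$ (resp.\ $J$) is the complement in $[4]$ of a single element, and the constraint $d_H \geq 2$ leaves only three unordered pairs to inspect:
\begin{align*}
\{(1,2,3),(2,3,4)\},\quad \{(1,2,3),(1,3,4)\},\quad \{(1,2,4),(2,3,4)\}.
\end{align*}

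For the lower bound, the cases $q \in \{2,3\}$ are impossible since four distinct evaluation points are needed. For $q = 4$ the code is full-length, so Lemma~\ref{lem:eqiii} applies directly; since there are only $(q-2)! = 2$ equivalence classes of orderings of $\F_4$ and both fall into types (1)/(2) of that lemma, no optimal code exists over $\F_4$. The value $q = 6$ is excluded as it is not a prime power. For $q = 5$ I use affine equivalence (Definition~\ref{def:equi}): since $|AGL(\F_5)| = 20$ equals the number of ordered pairs of distinct points in $\F_5$, every orbit under this action is full, so I may fix $(\alpha_1,\alpha_2) = (0,1)$. This leaves six representatives $(0,1,\alpha_3,\alpha_4)$ with $(\alpha_3,\alpha_4)$ an ordered pair of distinct elements from $\{2,3,4\}$, and for each I exhibit an explicit affine map together with a pair $(I,J)$ witnessing a common length-$3$ subsequence between two distinct codewords (for example, $(0,1,2,3)$ is handled by $f_{1,1}$ and $I=(1,2,3)$, $J=(2,3,4)$).

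For the upper bound, I take $\balpha = (0,1,2,5)$ and run through the six ordered pairs $(I,J)$ obtained by orienting the three unordered pairs listed above. For each such pair the equations $f_{a,b}(\balpha_I) = \balpha_J$ form three linear constraints in $(a,b)$; the first two pin down a unique candidate $(a,b) \in \F_7^* \times \F_7$, and a direct check shows the third constraint is violated in all six cases. By Lemma~\ref{lemxing} this certifies that the code has insdel distance $\geq 4$ and therefore corrects a single insdel error, matching the half-Singleton bound.

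The main obstacle is essentially bookkeeping: the argument is a finite case analysis, and the most delicate point is ensuring the case enumeration for $q = 5$ is genuinely exhaustive. The affine-equivalence reduction consumes the full $AGL(\F_5)$-freedom, so the six listed vectors really do represent all inequivalent codes; once this is recognized, the remaining verifications over $\F_5$ and $\F_7$ are mechanical.
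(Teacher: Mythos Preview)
Your proof is correct. Both you and the paper start from Lemma~\ref{lemxing} and normalize $(\alpha_1,\alpha_2)=(0,1)$ via the $2$-transitivity of $AGL(\F_q)$, but the routes then diverge.

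The paper works \emph{symbolically}: keeping $\alpha_3,\alpha_4$ generic, it runs through the four triples in $S=\{(0,1,\alpha_3),(0,1,\alpha_4),(0,\alpha_3,\alpha_4),(1,\alpha_3,\alpha_4)\}$ and determines exactly when some non-trivial affine map sends one triple to another. This yields the closed-form obstruction set $\alpha_4\in\{\alpha_3^2,\alpha_3^2-\alpha_3+1,-1/(\alpha_3-2)\}$ (together with the obvious exclusions), from which both the non-existence for $q\le 5$ and the validity of $(0,1,2,5)$ over $\F_7$ are read off. This characterization is then recorded as Remark~\ref{rem:rs-2-4} and reused in the proof of Theorem~\ref{thm:alg}.

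You instead do a direct finite check for each $q$: Lemma~\ref{lem:eqiii} for $q=4$, an explicit witness for each of the six normalized vectors over $\F_5$, and a six-case verification for $(0,1,2,5)$ over $\F_7$. This is perfectly valid and arguably cleaner for the bare statement of the lemma --- in particular you correctly observe that only the three unordered pairs $\{I,J\}$ with $d_H(I,J)\ge 2$ need checking, whereas the paper implicitly runs over all pairs of distinct triples (the $d_H=1$ cases being automatically vacuous since an affine map with two fixed points is the identity). The cost is that your argument does not produce the general criterion in Remark~\ref{rem:rs-2-4}, which the paper needs downstream.
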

\begin{proof}
Because of Lemma \ref{lemxing}, the capability of $\balpha$ giving rise to an optimal $\text{RS}_{4,2}(\balpha)$-code over $\mathbb{F}_q$ depends on the action of $AGL(\mathbb{F}_q)$, which is $2$-transitive on $\mathbb{F}_q$, see \cite[Lemma 2.2 (i)]{Xing}. This allows to choose the first two coordinates in $\balpha$ arbitrarily, as long as they are distinct, and so w.l.o.g. we can assume that $\balpha=(0,1,\alpha_1,\alpha_2)$ with $\alpha_1,\alpha_2 \ne 0,1$ and $\alpha_1 \ne \alpha_2$. 
From Lemma~\ref{lemxing} the code $\text{RS}_{4,2}(\balpha)$ has insdel distance $2n-4$ if and only if there is no non-trivial $f_{a,b} \in AGL(\mathbb{F}_q)$ that maps any triple in the following set to any other triple in the same set: $S:=\{(0,1,\alpha_1),(0,1,\alpha_2), (0,\alpha_1,\alpha_2), (1,\alpha_1,\alpha_2)\}$. 
The strategy now is to compute all the images through $f_{a,b}$ of all the triples in $S$, and force the condition described by the if and only if to be satisfied. Clearly we will assume that $(a,b) \ne (1,0)$, as otherwise $f_{a,b}$ would be the identity map.
\begin{itemize}
    \item[(i)] $f_{a,b}(0,1,\alpha_1)=(b,a+b,a\alpha_1+b) \in S$ forces either $b=0$ and $a=\alpha_1$ (note that if $a=1$ then $f_{a,b}$ is just the identity map) or $b=1$ and $a=\alpha_1-1$. In the first case we get that $f_{a,b}(0,1,\alpha_1)=(0,\alpha_1,\alpha_1^2)  \in S$ if and only if $\alpha_2= \alpha_1^2$. In the latter case $f_{a,b}(0,1,\alpha_1)=(1,\alpha_1,\alpha_1^2-\alpha_1+1)  \in S$ if and only if $\alpha_2= \alpha_1^2-\alpha_1+1$. In conclusion for all $(a,b) \ne (0,1)$ we have that $f_{a,b}(0,1,\alpha_1)  \not\in S$ exactly when $\alpha_2 \ne \alpha_1^2,\alpha_1^2-\alpha_1+1$.
    
    \item[(ii)] $f_{a,b}(0,1,\alpha_2)=(b,a+b,a\alpha_2+b) \in S$. As before this condition forces either $b=0$ and $a=\alpha_1$ or $b=1$ and $a=\alpha_1-1$. In the first case we get that $f_{a,b}(0,1,\alpha_2)=(0,\alpha_1,\alpha_1\alpha_2)  \in S$ if and only if $\alpha_1= 1$, which clearly cannot happen. In the latter case $f_{a,b}(0,1,\alpha_1)=(1,\alpha_1,\alpha_2(\alpha_1-1)+1)  \in S$ if and only if $\alpha_2=\alpha_2(\alpha_1-1)+1$. In conclusion, for all $(a,b) \ne (0,1)$ we have that $f_{a,b}(0,1,\alpha_2)  \not\in S$ when $\alpha_2(\alpha_1-2) \ne -1$. 
    
    \item[(iii)] $f_{a,b}(0,\alpha_1,\alpha_2)=(b,a\alpha_1+b,a\alpha_2+b) \in S$ forces either $b=0$ and $a=\alpha_1^{-1}$ or $b=1$ and $a\alpha_1+1=\alpha_1$. In the first case we get $f_{a,b}(0,\alpha_1,\alpha_2)=(0,1,\alpha_2/\alpha_1) \in S$ if and only if $\alpha_2/\alpha_1=\alpha_1$ or $\alpha_2/\alpha_1=\alpha_2$. This can happen only if $\alpha_2=\alpha_1^2$ as $\alpha_1 \ne 1$. In the second case we get $f_{a,b}(1,\alpha_1,\alpha_2)=(1,\alpha_1,\alpha_2(\alpha_1-1)/\alpha_1+1) \in S$ if and only if $\alpha_2(\alpha_1-1)/\alpha_1+1=\alpha_2$. This case never happens as $\alpha_1 \ne \alpha_2$. In conclusion, for all $(a,b) \ne (0,1)$ we have that $f_{a,b}(0,\alpha_1,\alpha_2)=(b,a\alpha_1+b,a\alpha_2+b) \not\in S$ when $\alpha_2 \ne \alpha_1^2$. 
    
    \item[(iv)] $f_{a,b}(1,\alpha_1,\alpha_2)=(a+b,a\alpha_1+b,a\alpha_2+b) \in S$ forces $a+b=0$ and either $a\alpha_1-a=1$ or $a\alpha_1-a=\alpha_1$. This is because this time if $a+b=1$ then we need $a\alpha_1+b=a\alpha_1+1-a=\alpha_1$ and hence $a=1$ (recall that $\alpha_1 \ne 1$). As before, if $(a,b)=(1,0)$ then we just have the identity map and we discard this case. If $a+b=0$ and $a\alpha_1-a=1$ then we get $f_{a,b}(1,\alpha_1,\alpha_2)=(0,1,(\alpha_2-1)/(\alpha_1-1)) \in S$ only if either $\alpha_2=\alpha_1^2-\alpha_1+1$ or $\alpha_2(\alpha_1-2)=-1$.
    If $a+b=0$ and $a\alpha_1-a=\alpha_1$ then $f_{a,b}(1,\alpha_1,\alpha_2)=(0,\alpha_1,(\alpha_2-1)\alpha_1/(\alpha_1-1))$ which is never in $S$ as $\alpha_1 \ne \alpha_2$.
    In conclusion for all $(a,b) \ne (1,0)$, one has $f_{a,b}(1,\alpha_1,\alpha_2)=(a+b,a\alpha_1+b,a\alpha_2+b) \not\in S$ when $\alpha_2 \ne \alpha_1^2-\alpha_1+1$ and $\alpha_2(\alpha_1-2) \ne -1$.
    
\end{itemize}

Summarizing up all the cases above we get that $\text{RS}_{4,2}(\balpha)$ with $\balpha=(0,1,\alpha_1,\alpha_2)$, $\alpha_1 \ne 0,1$ and $\alpha_2 \ne 0,1,\alpha_1$ has insdel distance $2n-4$ if and only if $\alpha_2 \not\in \{0,1,\alpha_1,\alpha_1^2,\alpha_1^2-\alpha_1+1\}$ and $\alpha_2 \ne -1/(\alpha_1-2)$ for $\alpha_1 \ne 2$.
This means $q \geq 7$ is necessary to find a good $\balpha$, as if $q=4,5$ then there is no $\alpha_2 \in \mathbb{F}_q$ that can satisfy the conditions above for any choice of $\alpha_1$.
On the other hand if $q=7$ then $\balpha=(0,1,2,5)$ satisfies all the required conditions.
\end{proof}

The following is a consequence of the proof of the previous lemma.

\begin{remark} \label{rem:rs-2-4}
    The $\text{RS}_{4,2}(\balpha)$ code with $\balpha=(0,1,\alpha_1,\alpha_2)$, $\alpha_1 \ne 0,1$ and $\alpha_2 \ne 0,1,\alpha_1$ is optimal if and only if $\alpha_2 \not\in \{0,1,\alpha_1,\alpha_1^2,\alpha_1^2-\alpha_1+1\}$ and $\alpha_2 \ne -1/(\alpha_1-2)$ for $\alpha_1 \ne 2$.
\end{remark}

\subsection{Induction on $k$: Rate $1/2$}
\label{subsec:k}

In this subsection we use the result of the previous subsection as the base case, and we apply induction on $k$ for rate $1/2$ RS codes. We need two claims to prove the main statement which is stated in Proposition~\ref{prop:exist}. 

\begin{claim} \label{cl:intersections}
    Let $\ell \ge 2$ and $I,J \in [\ell]^{\ell-1}$ be increasing sequences. Then if $I \ne J$ we have $d_H(I,J) = |s_J-s_I|$ where $s_I$ is the unique element in $\{1,\dots,\ell\} \setminus \{I_1,\dots, I_{(\ell-1)}\}$ and $s_J$ is the unique element in $\{1,\dots,\ell\} \setminus \{J_1,\dots, J_{(\ell-1)}\}$.
\end{claim}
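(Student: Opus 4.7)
The plan is to reduce the claim to a direct case analysis on the position of the missing elements. First I would observe that if $s_I = s_J$, then both $I$ and $J$ are the unique increasing sequence in $[\ell]^{\ell-1}$ whose entries form the set $\{1,\dots,\ell\}\setminus\{s_I\}$, so $I = J$. Since we assume $I \ne J$, we must have $s_I \ne s_J$, and by symmetry we can assume $s_I < s_J$.

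Next I would write $I$ and $J$ explicitly. Since $I$ is the increasing enumeration of $\{1,\dots,\ell\}\setminus\{s_I\}$, one has
\[
I_t = \begin{cases} t & \text{if } 1 \le t \le s_I - 1,\\ t+1 & \text{if } s_I \le t \le \ell - 1,\end{cases}
\qquad
J_t = \begin{cases} t & \text{if } 1 \le t \le s_J - 1,\\ t+1 & \text{if } s_J \le t \le \ell - 1.\end{cases}
\]
From here the claim reduces to comparing $I_t$ and $J_t$ on the three ranges $t < s_I$, $s_I \le t < s_J$, and $t \ge s_J$. On the first and third ranges both sequences are defined by the same branch of the piecewise formula above, so $I_t = J_t$; on the middle range $I_t = t+1$ while $J_t = t$, so $I_t \ne J_t$. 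Counting the middle range gives exactly $s_J - s_I$ disagreeing coordinates, which equals $|s_J - s_I|$ by our WLOG assumption.

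I expect no real obstacle here: the content is essentially bookkeeping once one notices that each increasing sequence in $[\ell]^{\ell-1}$ is uniquely determined by its missing element. The only thing that needs a brief justification is the observation about uniqueness in the first step and the WLOG reduction via swapping the roles of $I$ and $J$ to remove the absolute value.
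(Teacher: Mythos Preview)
Your proposal is correct and essentially identical to the paper's own proof: both observe that $s_I=s_J$ forces $I=J$, assume w.l.o.g.\ $s_I<s_J$, write down the same piecewise formulas for $I_t$ and $J_t$, and then count the positions where they differ (the paper phrases the final count as $\ell-1$ minus the number of agreements, but this is the same computation).
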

\begin{proof}
    Suppose $s_I=s_J$. Then we clearly have $I=J$, contradiction. Therefore we can assume w.l.o.g. that $s_I < s_J$. We have
    \begin{align} \label{eq:iuu}
        I_u = \begin{cases}
            u &\textnormal{ if $1 \le u < s_I$} \\
            u+1 &\textnormal{ if $s_I \le u \le \ell-1$} \\
        \end{cases}
    \end{align}
    \begin{align*}
           J_u = \begin{cases} 
            u &\textnormal{ if $1 \le u < s_J$} \\
            u+1 &\textnormal{ if $s_J \le u \le \ell-1$}. \\
        \end{cases} 
    \end{align*}
    Therefore, we obtain that 
    \begin{align*}
        d_H(I,J) &= |\{u \in \{1,\dots,\ell-1\}: I_u \ne J_u\}| \\
        &= \ell-1-|\{u \in \{1,\dots,\ell-1\}: I_u = J_u\}| \\
        &= \ell-1-(\min\{s_I,s_J\}-1+\ell-\max\{s_I,s_J\}) \\
        &= \ell-1-s_I+1-\ell+s_J \\
        &=s_J-s_I,
    \end{align*}
    which is the statement of the lemma.
\end{proof}

\begin{claim} \label{cl:22}
Let $\ell \ge 2$, let $\balpha=(\alpha_1,\dots,\alpha_{\ell}) \in \F_q^{\ell}$ be a vector of pairwise distinct elements of $\F_q$ and let $f \in \F_q[x]_{<k}$. If for increasing sequences $I, J \in [\ell]^{\ell-1}$ with $d_H(I,J) \ge k-1$ we have $f(\balpha_I)=f(\balpha_J)$, then $f$ is constant.
\end{claim}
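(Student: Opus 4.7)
The plan is to combine Claim~\ref{cl:intersections} with the basic fact that a nonzero polynomial of degree less than $k$ has at most $k-1$ roots (equivalently, cannot take the same value at $k$ distinct points). The hypothesis $d_H(I,J) \ge k-1$ must be translated into a concrete chain of equalities among the values $f(\alpha_i)$, and from there constancy is immediate.

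Concretely, I would first assume $I \ne J$ (otherwise the statement is vacuous), and let $s_I, s_J \in [\ell]$ be the unique indices missing from $I$ and $J$, respectively. By Claim~\ref{cl:intersections},
\[
d_H(I,J) = |s_J - s_I| \ge k - 1.
\]
Without loss of generality assume $s_I < s_J$, so $s_J - s_I \ge k-1$. Next, reading off the explicit description of $I$ and $J$ from the proof of Claim~\ref{cl:intersections}, the coordinates at which $I$ and $J$ disagree are exactly the positions $u$ with $s_I \le u \le s_J - 1$, and at such a position one has $I_u = u+1$ and $J_u = u$.

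Now I would exploit the coordinate-wise equality $f(\alpha_{I_u}) = f(\alpha_{J_u})$ for each $u \in [\ell-1]$. At the disagreeing positions this reads $f(\alpha_{u+1}) = f(\alpha_u)$ for every $u \in \{s_I, s_I+1, \dots, s_J-1\}$, which chains into
\[
f(\alpha_{s_I}) = f(\alpha_{s_I+1}) = \cdots = f(\alpha_{s_J}).
\]
That is a common value of $f$ at $s_J - s_I + 1 \ge k$ points, and these points are pairwise distinct by the hypothesis on $\balpha$. Since $\deg f < k$, the polynomial $f - f(\alpha_{s_I})$ has more zeros than its degree, hence it is identically zero, i.e., $f$ is constant.

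The only mild obstacle is the bookkeeping to extract the precise set of disagreement indices from Claim~\ref{cl:intersections} and to verify that the chain of equalities uses exactly $s_J - s_I$ consecutive relations (producing $s_J - s_I + 1 \ge k$ equal values). Everything else is a one-line invocation of the fundamental theorem of algebra for polynomials over a field.
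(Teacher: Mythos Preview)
Your proof is correct and follows essentially the same argument as the paper: both use Claim~\ref{cl:intersections} to extract the explicit form of $I$ and $J$, read off the chain of equalities $f(\alpha_{s_I}) = f(\alpha_{s_I+1}) = \cdots = f(\alpha_{s_J})$ from the coordinatewise condition $f(\balpha_I)=f(\balpha_J)$, and then conclude via the degree bound that $f - f(\alpha_{s_I})$ must vanish identically. Your write-up is in fact slightly cleaner in that you explicitly identify the disagreeing positions as $\{s_I,\dots,s_J-1\}$ and the corresponding entries $I_u=u+1$, $J_u=u$, which the paper leaves implicit.
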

\begin{proof}
Let $s_I$ and $s_J$ be defined as in Claim~\ref{cl:intersections}. W.l.o.g. assume that $s_I < s_J$. Since $f(\balpha_I)=f(\balpha_J)$ we have that $$(f(\alpha_{I_1}),\dots, f(\alpha_{I_{\ell-1}})) = (f(\alpha_{J_1}),\dots, f(\alpha_{J_{\ell-1}}))$$ and by the Equations~\eqref{eq:iuu} from the proof of Claim~\ref{cl:intersections} we have
\begin{align*}
        \begin{cases}
       f(\alpha_{s_I+1})&= f(\alpha_{s_I}), \\
        f(\alpha_{s_I+2}) &= f(\alpha_{s_I+1}), \\
                              &\vdots                  \\
        f(\alpha_{s_I+(s_J-s_I)}) &= f(\alpha_{s_I+(s_J-s_I)-1}).
    \end{cases}
\end{align*}
This means in particular that
\begin{align*}
    f(\alpha_{s_I}) = f(\alpha_{s_I+1}) = \dots = f(\alpha_{s_I+(s_J-s_I)}),
\end{align*}
where $s_J-s_I+1 = d_H(I,J) +1\ge k$. Then if $z:=f(\alpha_{s_I})$, this means that $f(x)-z$ has at least $s_J-s_I+1\ge k$ zeros (i.e. $\{\alpha_{s_I},\dots,\alpha_{s_I+(s_J-s_I)}\}$ are zeros of $f(x)-z$) and it has degree at most $k-1$. Therefore $f(x)=z$ and so $f$ is constant.
\end{proof}

\begin{proposition} \label{prop:exist}
Let $k \ge 2$. If $q \ge 20k^4 - 90k^3 + 150k^2 - 106k + 27$ then there exists an evaluation vector~$\balpha \in \F_q^{2k}$ such that the code $\text{RS}_{2k,k}(\balpha)$ is optimal.
\end{proposition}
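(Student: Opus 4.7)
The proof will proceed by induction on $k \ge 2$. The base case $k = 2$ is handled by Lemma~\ref{lem:k2}: the vector $(0,1,2,5) \in \F_7^4$ yields an optimal $\text{RS}_{4,2}$ code, and the polynomial $20k^4 - 90k^3 + 150k^2 - 106k + 27$ evaluates to $15$ at $k=2$, so the hypothesis $q \ge 15$ is more than sufficient. For the inductive step, the polynomial bound is monotonically increasing in $k$, so the induction hypothesis supplies an evaluation vector $\balpha' = (\alpha_1,\ldots,\alpha_{2k-2}) \in \F_q^{2k-2}$ with distinct entries for which $\text{RS}_{2k-2,k-1}(\balpha')$ is optimal. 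The plan is to show that $\balpha'$ can always be extended by two new evaluation points $\alpha_{2k-1},\alpha_{2k}$, distinct from each other and from the entries of $\balpha'$, such that $\text{RS}_{2k,k}(\balpha)$ is also optimal.

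By Lemma~\ref{lem:iff}, optimality of the extended code is equivalent to requiring $f(\balpha_I) \neq g(\balpha_J)$ for all admissible $(f,g)$ with $f \neq g$ and every pair of increasing sequences $I,J \in [2k]^{2k-1}$. Writing $s_I,s_J \in [2k]$ for the unique missing indices, Claim~\ref{cl:intersections} (applied with $\ell = 2k$) gives $d_H(I,J) = |s_J - s_I|$, and Lemma~\ref{lem:alg-cond} lets us restrict attention to pairs with $|s_J - s_I| \ge k$. For each such $(s_I,s_J)$, I will count the bad pairs $(\alpha_{2k-1},\alpha_{2k}) \in \F_q^2$ for which the system $f(\balpha_I) = g(\balpha_J)$ admits some admissible $(f,g)$, organizing the analysis into three cases: (A) both $s_I,s_J \in [2k-2]$; (B) exactly one of $s_I,s_J$ lies in $\{2k-1,2k\}$; (C) both lie in $\{2k-1,2k\}$, which is vacuous since then $|s_J - s_I| \le 1 < k$.

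In Case (A), both $I$ and $J$ contain $2k-1$ and $2k$ at their last two positions, so the equations at those positions force $(f-g)(\alpha_{2k-1}) = (f-g)(\alpha_{2k}) = 0$, i.e., the two new points must be distinct roots of $f-g$, a polynomial of degree less than $k$. The remaining equations either impose further root conditions $(f-g)(\alpha_r) = 0$ on old entries outside $[s_I,s_J]$, or constitute a \emph{shift block} $f(\alpha_{t+1}) = g(\alpha_t)$ over consecutive old entries indexed by $[s_I, s_J - 1]$. Knowing the root conditions determines $f-g$ up to a small number of degrees of freedom, and the shift block then constrains $f$ (hence $g$); Claim~\ref{cl:22} is the key tool for ruling out degenerate solutions (for instance, where $f-g$ would vanish on so many points that it is identically zero, forcing $f = g$ to be constant on the shift block and contradicting $f \neq g$). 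Case (B) is handled analogously, with only one of the new points appearing as a root condition and the other arising from a single transition equation such as $f(\alpha_{2k-1}) = g(\alpha_{2k-2})$ or $f(\alpha_{2k}) = g(\alpha_{2k-1})$.

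Summing the contributions over the $O(k^2)$ admissible pairs $(s_I,s_J)$ and over the admissible $(f,g)$ per pair (each contributing at most $(k-1)(k-2)$ ordered root pairs in Case (A), and an analogous polynomial-in-$k$ count in Case (B)) yields an $O(k^4)$ bound on the number of bad pairs $(\alpha_{2k-1},\alpha_{2k})$. Combined with the $O(kq)$ pairs forbidden by the distinctness conditions $\alpha_{2k-1},\alpha_{2k} \notin \{\alpha_1,\ldots,\alpha_{2k-2}\}$ and $\alpha_{2k-1} \neq \alpha_{2k}$, the choice $q \ge 20k^4 - 90k^3 + 150k^2 - 106k + 27$ ensures that the excluded set is a proper subset of $\F_q^2$, so a valid extension exists and the induction closes. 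The main obstacle is the precise bookkeeping in the case analysis, in particular determining exactly how many admissible $(f,g)$ survive the shift block (via Claim~\ref{cl:22}) and combining this with the root conditions so that the final sum matches the claimed quartic polynomial with its specific coefficients $20, -90, 150, -106, 27$.
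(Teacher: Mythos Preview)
Your overall architecture matches the paper's: induct on $k$ using Lemma~\ref{lem:k2} as a base, invoke Lemma~\ref{lem:iff} to reduce to pairs of increasing sequences $I,J\in[2k]^{2k-1}$, parametrize them by their missing indices $s_I,s_J$, and extend $\balpha'$ by two new points that avoid a bad set. The paper organizes the count a little differently---it truncates to the first $2k-3$ coordinates $I^\star,J^\star\in[2k-2]^{2k-3}$ and handles the last two coordinates via five explicit systems---but that difference is cosmetic and the two decompositions are in bijection.

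The genuine gap is your assertion that the number of bad pairs $(\alpha_{2k-1},\alpha_{2k})$ is $O(k^4)$. For a fixed $(s_I,s_J)$ the $2k-3$ equations coming from the \emph{old} evaluation points are linear in the $2k-2$ free coefficients of the normalized pair $(\tilde f,\tilde g)$, so the admissible $(\tilde f,\tilde g)$ form a one-parameter family of size $q$, not an $O_k(1)$ set as your summary implicitly assumes. Claim~\ref{cl:22} is used in the paper precisely to show that the associated coefficient matrix $M$ has trivial kernel, i.e.\ that once one free parameter (the paper takes $\tilde g_{k-1}\in\F_q$) is fixed, $(\tilde f,\tilde g)$ is unique; it does not cut the family down below size $q$. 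Each of those $q$ pairs then contributes at most $(k-1)^2$ bad choices of $(\alpha_{2k-1},\alpha_{2k})$, so the true count is $O(k^4 q)$ and the paper's inequality reads $\binom{q-(2k-2)}{2}\ge 5(k-1)^3(2k-3)\,q$, which is exactly where the quartic threshold $q\ge 20k^4-90k^3+150k^2-106k+27$ comes from. With your claimed count of $O(k^4)+O(kq)$ excluded pairs versus $q^2$ available, you would only need $q=\Theta(k^2)$; the fact that this is strictly better than the stated bound is a signal that a factor of $q$ has gone missing.
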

\begin{proof}
We prove the statement by induction on $k$.

For $k=2$ from Lemma~\ref{lem:iff} we know that for $q \ge 7$ we can find an evaluation vector that has the desired properties.

Now assume that we have found a vector $\widetilde{\balpha} = (\alpha_1,\dots,\alpha_{2k-2}) \in \F_q^{2k-2}$ for which the code $\text{RS}_{2k-2,k-1}(\widetilde{\balpha})$ has optimal insdel distance 4, and thus by Lemma~\ref{lem:iff} has the property that ${f(\widetilde{\balpha}_I)} \ne {g(\widetilde{\balpha}_J)}$ for any $f \in \{f_{k-2}x^{k-2}+\dots +f_1x : f_i \in \F_q \textnormal{ for all $i \in \{1,\dots,k-3\}$}, f_{k-2} \in \{0,1\}\}$ and $g \in \F_q[x]_{< k-1}$, with $f\ne g$ and any increasing sequences $I, J \in [2k-2]^{2k-3}$.
In order to obtain a $k$-dimensional code with insdel distance $4$ in $\F_q$, again by Lemma~\ref{lem:iff} we need to make sure that we have ${f}(\balpha_I) \ne {g}(\balpha_J)$ for any $f \in \{f_{k-1}x^{k-1}+\dots +f_1x : f_i \in \F_q \textnormal{ for all $i \in \{1,\dots,k-2\}$}, f_{k-2} \in \{0,1\}\}$ and $g \in \F_q[x]_{< k}$ with $f \ne g$ and any increasing sequences $I, J \in [2k]^{2k-1}$. We count the tuples $(\alpha_{2k-1},\alpha_{2k}) \in \F_q^2$ that we need to exclude in order to make sure that the evaluation points $(\alpha_1,\dots, \alpha_{2k})$ give a RS code whose largest common subsequence is of length $2k-1$. 

Let  $I=(I_1,\dots,I_{2k-1}), J=(J_1,\dots,J_{2k-1}) \in [2k]^{2k-1}$ be increasing sequences, and denote $I^\star:=(I_1,\dots,I_{2k-3})$ and $J^\star=(J_1,\dots,J_{2k-3})$. We clearly have $I^\star,J^\star \in [2k-2]^{2k-3}$ and there are a total of $\binom{2k-2}{2}=(2k-2)(2k-3)/2$ choices for $I^\star$ and $J^\star$ (such that they are not the same). Note that if $f({\balpha}_I) = g({\balpha}_J)$ then also $f({\balpha}_{I^{\star}}) = g({\balpha}_{J^{\star}})$ for some $f,g \in \F_q[x]_{< k}$ with $f\ne g$. From imposing that $f({\balpha}_{I^{\star}}) = g({\balpha}_{J^{\star}})$, we obtain the following system of $2k-3$ linear equations (in the variables $f_0,\dots,f_{k-1},g_0,\dots,g_{k-1}$):
    \begin{align*}
        \begin{cases}
            f_{k-1}\alpha_{I_1}^{k-1}+\dots +f_0 &= g_{k-1}\alpha_{J_1}^{k-1}+\dots+g_0 \\
            f_{k-1}\alpha_{I_2}^{k-1}+\dots +f_0 &= g_{k-1}\alpha_{J_2}^{k-1}+\dots+g_0 \\
            &\vdots \\
             f_{k-1}\alpha_{I_{2k-3}}^{k-1}+\dots +f_0 &= g_{k-1}\alpha_{J_{2k-3}}^{k-1}+\dots+g_0.
        \end{cases}
    \end{align*}
    Now suppose that $f_{k-1}=g_{k-1}=0$. Then $f, g \in \F_q[x]_{< k-1}$, $f \ne g$, and we have $f({\balpha}_{I^{\star}}) = g({\balpha}_{J^{\star}})$, contradicting the fact that $\widetilde{\balpha}=(\alpha_1,\dots,\alpha_{2k-3})$ was chosen such that $\text{RS}_{2k-2,k-1}(\widetilde{\balpha})$ has optimal insdel distance 4. Therefore at least one of $f_{k-1}$ or $g_{k-1}$ has to be non-zero. W.l.o.g. assume $f_{k-1}\ne 0$, and we rewrite the set of equations as follows (subtracting from both sides $f_0$, dividing both sides by $f_{k-1}$, and rearranging):
    \begin{align} \label{eq:linear-system-for-alg}
        \begin{cases} \sum_{j=0}^{k-2}\widetilde{g}_j\alpha_{J_1}^j - \sum_{i=1}^{k-2}\widetilde{f}_i\alpha_{J_1}^i &=  \alpha_{I_1}^{k-1}-\widetilde{g}_{k-1}\alpha_{J_1}^{k-1} \\
        \sum_{j=0}^{k-2}\widetilde{g}_j\alpha_{J_2}^j - \sum_{i=1}^{k-2}\widetilde{f}_i\alpha_{J_2}^i &=  \alpha_{I_2}^{k-1}-\widetilde{g}_{k-1}\alpha_{J_2}^{k-1} \\
            &\vdots \\
        \sum_{j=0}^{k-2}\widetilde{g}_j\alpha_{J_{2k-3}}^j - \sum_{i=1}^{k-2}\widetilde{f}_i\alpha_{J_{2k-3}}^i &=  \alpha_{I_{2k-3}}^{k-1}-\widetilde{g}_{k-1}\alpha_{J_{2k-3}}^{k-1} \\
        \end{cases}
    \end{align}
    where $\widetilde{g}(x) := \frac{g(x)-f_0}{f_{k-1}}$ and $\widetilde{f}(x) := \frac{f(x)-f_0}{f_{k-1}}$).
    This system of $2k-3$ linear equations has $2k-2$ unknowns, and so we assign a fixed value in $\F_q$ to $\widetilde{g}_{k-1}$. We want to show now that there is at most one solution to this system of equations, and so $\widetilde{f}$ and~$\widetilde{g}$ are uniquely determined (up to choosing $\widetilde{g}_{k-1} \in \F_q$ freely). In order to do so, it is enough to show that the kernel of the following matrix is trivial:
    \begin{align*}
        M:=\begin{pmatrix} 
	1 & \alpha_{J_1} & \ldots & \alpha_{J_1}^{k-2}  & \alpha_{I_1} &\ldots & \alpha_{I_1}^{k-2} \\ 
	1 & \alpha_{J_2} & \ldots & \alpha_{J_2}^{k-2}  & \alpha_{I_2} &\ldots & \alpha_{I_2}^{k-2} \\
	\vdots &\vdots & \ldots &\vdots &\vdots &\ldots &\vdots \\
	1 & \alpha_{J_{2k-3}} & \ldots & \alpha_{J_{2k-3}}^{k-2}  & \alpha_{I_{2k-3}} &\ldots & \alpha_{I_{2k-3}}^{k-2}\\
	\end{pmatrix} .
    \end{align*}
    From Lemma~\ref{lem:alg-cond} we know that the only possible vectors in the kernel of $M$ are of the form $(0,h_1,\dots,h_{k-2},-h_1,\dots,-h_{k-2})$ for some $h_1,\dots,h_{k-2} \in \F_q$. Let $$(0,h_1,\dots,h_{k-2},-h_1,\dots,-h_{k-2})$$ be in the kernel of $M$ and define $h(x) = h_1x+h_2x^2+\dots + h_{k-2}x^{k-2} \in \F[x]_{< k-1}$. We have $h({\balpha}_{I^\star}) = h({\balpha}_{J^\star})$ and $h({\balpha}_I)=h({\balpha}_{J})$. Note also, since $I,J \in [2k]^{2k-1}$ with $d_H(I,J) \ge k$, we have $d_H(I^\star,J^\star) \ge k-2$. Therefore, by Claim~\ref{cl:22} we obtain that $h$ is constant, and since $h(x) = h_1x+h_2x^2+\dots + h_{k-2}x^{k-2}$, this implies $h=0$.

    For $I,J \in [2k]^{2k-1}$ we have $2k-2 \le I_{2k-2} < I_{2k-1} \le 2k$ and $2k-2 \le J_{2k-2} < J_{2k-1} \le 2k$. There are a total of $\binom{3}{2}^2$ options for $(I_{2k-2},I_{2k-1},J_{2k-2}, J_{2k-1})$. 
    However, we cannot have $I_{2k-2}=J_{2k-2}=2k-2$, because this would force $(I_1,\dots,I_{2k-2}) = (J_1,\dots,J_{2k-2})$ and since $d_H(I,J) \ge k$ this cannot hold. Therefore, there are $\binom{3}{2}^2-\binom{2}{1}^2=5$ possible options for $(I_{2k-2},I_{2k-1},J_{2k-2}, J_{2k-1})$. 
    For each of the possible realizations of $(I_{2k-2},I_{2k-1},J_{2k-2}, J_{2k-1})$ we obtain a set of equations in $\alpha_{2k-1},\alpha_{2k}$, and so in total we have the following 5 sets of equations in $\alpha_{2k-1},\alpha_{2k}$ that would lead to problems:
    \begin{align}\label{eq:bad-pairs-computations}
    \begin{cases}
        &\widetilde{f}(\alpha_{2k-2})=\widetilde{g}(\alpha_{2k-1}) \\
        &\widetilde{f}(\alpha_{2k-1})=\widetilde{g}(\alpha_{2k})
    \end{cases}
    \begin{cases}
        &\widetilde{f}(\alpha_{2k-2})=\widetilde{g}(\alpha_{2k-1}) \\
        &\widetilde{f}(\alpha_{2k})=\widetilde{g}(\alpha_{2k})
    \end{cases}
    \begin{cases}
        &\widetilde{f}(\alpha_{2k-1})=\widetilde{g}(\alpha_{2k-2}) \\
        &\widetilde{f}(\alpha_{2k})=\widetilde{g}(\alpha_{2k-1})
    \end{cases} 
     \\
    \begin{cases}
        &\widetilde{f}(\alpha_{2k-1})=\widetilde{g}(\alpha_{2k-2}) \\
        &\widetilde{f}(\alpha_{2k})=\widetilde{g}(\alpha_{2k})
    \end{cases}
    \begin{cases}
        &\widetilde{f}(\alpha_{2k-1})=\widetilde{g}(\alpha_{2k-1}) \\
        &\widetilde{f}(\alpha_{2k})=\widetilde{g}(\alpha_{2k}).
    \end{cases}\nonumber
    \end{align}
    Each of these systems has at most $(k-1)^2$ solutions for $(\alpha_{2k-1},\alpha_{2k})$ (because $\widetilde{f}(x), \widetilde{g}(x) \in \F_q[x]_{<k}$). Since we chose $\widetilde{g}_{k-1} \in \F_q$ freely, there are at most $5(k-1)^2q$ tuples $(\alpha_{2k-1},\alpha_{2k})$ that would cause problems. We also need to impose that $\alpha_{2k-1},\alpha_{2k} \notin \{\alpha_1,\dots,\alpha_{2k-2}\}$. Let $\mA:=\{\alpha_1,\dots,\alpha_{2k-2}\}$. Therefore we have that whenever
    \begin{align*}
        \binom{|\F_q\setminus \mA|}{2} \ge (2k-2)(2k-3)5(k-1)^2q/2
    \end{align*}
    then we have enough elements in $\F_q$ to choose $(\alpha_{2k-1},\alpha_{2k}) \in \F_q^2$ that neither satisfy the 5 set of equations, nor already show up as evaluation points in $\widetilde{\balpha}$. Straightforward computations give the lower bound on $q$ as stated in the proposition.
\end{proof}


Theorem~\ref{thm:alg} is obtained by investigating the algorithm implied by the proof of Proposition~\ref{prop:exist}. 

    \begin{algorithm}
\caption{Construction of $\balpha\in \Fq^{2k}$} \label{alg:construct}
\begin{algorithmic}[1]
\REQUIRE{$\alpha_1, \alpha_2, \alpha_3, \alpha_4 \in \mathbb{F}_q$}
\FOR{$i = 3$ to $k$} \label{ln:outer-loop}
    \STATE Initialize $\mathcal{B} \gets \emptyset$.
    \FORALL{pairs of increasing sequences $I^{\star}, J^{\star} \in [2i-2]^{2i-3}$ and $\widetilde{g}_{i-1} \in \mathbb{F}_q$} \label{ln:loop-for-lin}
        \STATE Solve the linear system in \eqref{eq:linear-system-for-alg} to obtain $(\widetilde{f}, \widetilde{g})$. \label{ln:root-finding} \label{ln:lin-sys}
        \STATE Solve each of the $5$ systems in~\eqref{eq:bad-pairs-computations}. \label{ln:root-find}
        \STATE Add to $\mathcal{B}$ every pair $(\beta_1, \beta_2)$ which is a solution to one of these systems.
    \ENDFOR
    \STATE Find $(\alpha_{2i-1}, \alpha_{2i}) \in \mathbb{F}_q^2 \setminus \mathcal{B}$ such that $\alpha_{2i-1}, \alpha_{2i} \notin \{\alpha_1, \ldots, \alpha_{2i-2}\}$. \label{ln:good-line}
    \STATE Output $(\alpha_1, \ldots, \alpha_{2k})$
\ENDFOR
\end{algorithmic}
\end{algorithm}    

\AlgHalfCode*
\begin{proof}
    We prove this theorem by describing explicitly the algorithm implied by the proof of Proposition~\ref{prop:exist}.
    First, let $q \geq 100 k^4$ be a prime power and define the $[4,2]_q$ RS code according to Remark~\ref{rem:rs-2-4}. Specifically, define $\alpha_1 = 0, \alpha_2 = 1$ and $\alpha_3, \alpha_4$ according to the constraints defined in Remark~\ref{rem:rs-2-4} to ensure that this $[4,2]_q$ RS code can correct a single insdel error. Now, run the algorithm given in Algorithm~\ref{alg:construct}.
    The correctness of this algorithm follows from Proposition~\ref{prop:exist}. 
    
    We now analyze the running time.
    The outer loop, in line~\ref{ln:outer-loop}, runs for $k-2$ iterations. 
    The inner loop in line~\ref{ln:loop-for-lin} runs for $O(i^2\cdot q) \leq O(k^2\cdot q)$. Indeed, there are at most $\binom{2i-2}{2}$ options for $I^{\star}, J^{\star}$ and $q$ options for $\widetilde{g}_{i-1}$. Note here that computing all the pairs $I^{\star}, J^{\star}\in [2i-2]^{2i-3}$ can be done in $O(i^3)$. Indeed, we need to run over all pairs of elements $(i,j)\in [2i-2]$ and output $I^{\star}, J^{\star} = [2i-2]\setminus\{i\}, [2i-2]\setminus\{j\}$.
    
    Inside the loop, in line~\ref{ln:lin-sys}, we need to solve a linear system which has at most one solution, and this takes $O(k^3)$ time. Then, in line~\ref{ln:root-find}, we have $5$ systems of equations, and we need to solve each one separately. One can verify that each of these systems can be solved by applying twice a root-finding algorithm for a degree $k-1$ polynomial. Thus, the entire inner loop takes $\textup{poly}(k, q)$.

    Finally, in line~\ref{ln:good-line}, we go over all $\binom{q}{2}$
    possible pairs and find a \textit{good} pair. Note that this step takes $O(q^2)$ and since $k=\Theta (q^4)$, the theorem follows.
\end{proof}


\section{Discussion and Future Work}
In this paper, we studied Reed-Solomon codes in the presence of insertion and deletion errors. Specifically, we investigated full-length Reed-Solomon codes and demonstrated that, when these codes have dimension 2, almost any ordering of the elements of $\mathbb{F}_q$ results in a code that can correct at least a single insertion or deletion error. Furthermore, we proved that for sufficiently large field size~$q$, nearly all full-length $2$-dimensional Reed-Solomon codes can correct up to $(1 - \delta)q$ insertion and deletion errors for any $0 < \delta < 1$.  Finally, using a probabilistic argument, we showed that if $q$ is large, there exists an ordering of $\mathbb{F}_q$ such that the $k$-dimensional Reed-Solomon code, with this ordering as the evaluation vector, can correct up to $q/(10k)$ insertion and deletion errors.  

In the second part of the paper, we investigated Reed-Solomon codes with a rate of $1/2$. By the half-Singleton bound, such codes can correct at most a single insertion or deletion error. We provided an existence result for codes with a rate of $1/2$ by induction on the dimension $k$, proving that \emph{optimal} codes---those capable of correcting the maximum number of insertion and deletion errors allowed by the half-Singleton bound---always exist when the underlying field size satisfies $q = O(k^4)$. The proof of this result also led to a deterministic algorithm for constructing such codes, which runs in polynomial time.  

Although we made progress toward a better understanding of Reed-Solomon codes in the context of insertion and deletion errors, several intriguing problems remain open. A natural question arising from our results is to better understand which orderings of the elements of the finite field $\mathbb{F}_q$ yield a full-length Reed-Solomon code that performs \emph{well} against insertion and deletion errors. In particular, even though we know that a random full-length 2-dimensional Reed-Solomon code will be able to correct $(1 - \delta)q$ insertion and deletion errors for any $0 < \delta < 1$ as long as $q$ is large enough, we do not know how to construct such a code explicitly. 

Moreover, the approach outlined in Section~\ref{sec:ex} does not appear to generalize in an obvious way to longer Reed-Solomon codes (with rates smaller than $1/2$). Current sufficient conditions on $q$ for the existence of such codes seem too restrictive,
and we anticipate to get a better understanding on which field size is actually required for the existence of effective Reed-Solomon codes against insertion and deletion errors.

\bigskip
\bigskip

\bibliographystyle{IEEEtran}
\bibliography{ourbib}

\begin{thebibliography}{10}
\providecommand{\url}[1]{#1}
\csname url@samestyle\endcsname
\providecommand{\newblock}{\relax}
\providecommand{\bibinfo}[2]{#2}
\providecommand{\BIBentrySTDinterwordspacing}{\spaceskip=0pt\relax}
\providecommand{\BIBentryALTinterwordstretchfactor}{4}
\providecommand{\BIBentryALTinterwordspacing}{\spaceskip=\fontdimen2\font plus
\BIBentryALTinterwordstretchfactor\fontdimen3\font minus \fontdimen4\font\relax}
\providecommand{\BIBforeignlanguage}[2]{{%
\expandafter\ifx\csname l@#1\endcsname\relax
\typeout{** WARNING: IEEEtran.bst: No hyphenation pattern has been}%
\typeout{** loaded for the language `#1'. Using the pattern for}%
\typeout{** the default language instead.}%
\else
\language=\csname l@#1\endcsname
\fi
#2}}
\providecommand{\BIBdecl}{\relax}
\BIBdecl

\bibitem{con2023reed}
R.~Con, A.~Shpilka, and I.~Tamo, ``{R}eed {S}olomon codes against adversarial insertions and deletions,'' \emph{IEEE Transactions on Information Theory}, vol.~69, no.~5, pp. 2991--3000, 2023.

\bibitem{cheraghchi2020overview}
M.~Cheraghchi and J.~Ribeiro, ``An overview of capacity results for synchronization channels,'' \emph{IEEE Transactions on Information Theory}, vol.~67, no.~6, pp. 3207--3232, 2020.

\bibitem{haeupler2021synchronization}
\BIBentryALTinterwordspacing
B.~Haeupler and A.~Shahrasbi, ``Synchronization strings and codes for insertions and deletions - {A} survey,'' \emph{{IEEE} Trans. Inf. Theory}, vol.~67, no.~6, pp. 3190--3206, 2021. [Online]. Available: \url{https://doi.org/10.1109/TIT.2021.3056317}
\BIBentrySTDinterwordspacing

\bibitem{safavi2002traitor}
R.~Safavi-Naini and Y.~Wang, ``Traitor tracing for shortened and corrupted fingerprints,'' in \emph{ACM workshop on Digital Rights Management}.\hskip 1em plus 0.5em minus 0.4em\relax Springer, 2002, pp. 81--100.

\bibitem{wang2004deletion}
Y.~Wang, L.~McAven, and R.~Safavi-Naini, ``{Deletion correcting using generalized {R}eed-{S}olomon codes},'' in \emph{Coding, Cryptography and Combinatorics}.\hskip 1em plus 0.5em minus 0.4em\relax Springer, 2004, pp. 345--358.

\bibitem{tonien2007construction}
D.~Tonien and R.~Safavi-Naini, ``{Construction of deletion correcting codes using generalized {R}eed--{S}olomon codes and their subcodes},'' \emph{Designs, Codes and Cryptography}, vol.~42, no.~2, pp. 227--237, 2007.

\bibitem{duc2019explicit}
T.~D. Duc, S.~Liu, I.~Tjuawinata, and C.~Xing, ``Explicit constructions of two-dimensional {R}eed-{S}olomon codes in high insertion and deletion noise regime,'' \emph{IEEE Transactions on Information Theory}, vol.~67, no.~5, pp. 2808--2820, 2021.

\bibitem{liu20212}
S.~Liu and I.~Tjuawinata, ``On 2-dimensional insertion-deletion {R}eed-{S}olomon codes with optimal asymptotic error-correcting capability,'' \emph{Finite Fields and Their Applications}, vol.~73, p. 101841, 2021.

\bibitem{con2024optimal}
R.~Con, A.~Shpilka, and I.~Tamo, ``Optimal two-dimensional {R}eed--{S}olomon codes correcting insertions and deletions,'' \emph{IEEE Transactions on Information Theory}, 2024.

\bibitem{liu2024optimal}
J.~Liu, ``Optimal {RS} codes and {GRS} codes against adversarial insertions and deletions and optimal constructions,'' \emph{IEEE Transactions on Information Theory}, 2024.

\bibitem{con2024random}
R.~Con, Z.~Guo, R.~Li, and Z.~Zhang, ``Random {R}eed-{S}olomon codes achieve the half-singleton bound for insertions and deletions over linear-sized alphabets,'' \emph{arXiv preprint arXiv:2407.07299}, 2024.

\bibitem{levenshtein1966binary}
V.~I. Levenshtein, ``Binary codes capable of correcting deletions, insertions, and reversals,'' in \emph{Soviet physics doklady}, vol.~10, 1966, pp. 707--710.

\bibitem{varshamov1965codes}
R.~Varshamov and G.~Tenengolts, ``Codes which correct single asymmetric errors (in {R}ussian),'' \emph{Automatika i Telemkhanika}, vol. 161, no.~3, pp. 288--292, 1965.

\bibitem{gabrys2018codes}
R.~Gabrys and F.~Sala, ``Codes correcting two deletions,'' \emph{IEEE Transactions on Information Theory}, vol.~65, no.~2, pp. 965--974, 2018.

\bibitem{sima2019two}
J.~Sima, N.~Raviv, and J.~Bruck, ``Two deletion correcting codes from indicator vectors,'' \emph{IEEE transactions on information theory}, vol.~66, no.~4, pp. 2375--2391, 2019.

\bibitem{brakensiek2017efficient}
J.~Brakensiek, V.~Guruswami, and S.~Zbarsky, ``Efficient low-redundancy codes for correcting multiple deletions,'' \emph{IEEE Transactions on Information Theory}, vol.~64, no.~5, pp. 3403--3410, 2017.

\bibitem{sima2020optimal}
J.~Sima and J.~Bruck, ``On optimal k-deletion correcting codes,'' \emph{IEEE Transactions on Information Theory}, vol.~67, no.~6, pp. 3360--3375, 2020.

\bibitem{sima-q2020optimal}
J.~Sima, R.~Gabrys, and J.~Bruck, ``Optimal codes for the q-ary deletion channel,'' in \emph{2020 IEEE International Symposium on Information Theory (ISIT)}.\hskip 1em plus 0.5em minus 0.4em\relax IEEE, 2020, pp. 740--745.

\bibitem{guruswami2021explicit}
V.~Guruswami and J.~H{\aa}stad, ``Explicit two-deletion codes with redundancy matching the existential bound,'' \emph{IEEE Transactions on Information Theory}, vol.~67, no.~10, pp. 6384--6394, 2021.

\bibitem{liu2024explicit}
S.~Liu, I.~Tjuawinata, and C.~Xing, ``Explicit construction of q-ary 2-deletion correcting codes with low redundancy,'' \emph{IEEE Transactions on Information Theory}, 2024.

\bibitem{haeupler2021synchronization-org}
B.~Haeupler and A.~Shahrasbi, ``Synchronization strings: Codes for insertions and deletions approaching the {S}ingleton bound,'' \emph{Journal of the ACM}, vol.~68, no.~5, pp. 1--39, 2021.

\bibitem{yasunaga2024improved}
K.~Yasunaga, ``Improved bounds for codes correcting insertions and deletions,'' \emph{Designs, Codes and Cryptography}, pp. 1--12, 2024.

\bibitem{levenshtein2002bounds}
V.~I. Levenshtein, ``Bounds for deletion/insertion correcting codes,'' in \emph{Proceedings IEEE International Symposium on Information Theory (ISIT)}.\hskip 1em plus 0.5em minus 0.4em\relax IEEE, 2002, p. 370.

\bibitem{abdel2007linear}
K.~A. Abdel-Ghaffar, H.~C. Ferreira, and L.~Cheng, ``On linear and cyclic codes for correcting deletions,'' in \emph{2007 IEEE International Symposium on Information Theory (ISIT)}.\hskip 1em plus 0.5em minus 0.4em\relax IEEE, 2007, pp. 851--855.

\bibitem{cheng2023efficient}
K.~Cheng, V.~Guruswami, B.~Haeupler, and X.~Li, ``Efficient linear and affine codes for correcting insertions/deletions,'' \emph{SIAM Journal on Discrete Mathematics}, vol.~37, no.~2, pp. 748--778, 2023.

\bibitem{chen2022coordinate}
H.~Chen, ``Coordinate-ordering-free upper bounds for linear insertion-deletion codes,'' \emph{IEEE Transactions on Information Theory}, vol.~68, no.~8, pp. 5126--5132, 2022.

\bibitem{ji2023strict}
Q.~Ji, D.~Zheng, H.~Chen, and X.~Wang, ``Strict half-singleton bound, strict direct upper bound for linear insertion-deletion codes and optimal codes,'' \emph{IEEE Transactions on Information Theory}, vol.~69, no.~5, pp. 2900--2910, 2023.

\bibitem{xie2024new}
C.~Xie, H.~Chen, L.~Qu, and L.~Liu, ``New dimension-independent upper bounds on linear insdel codes,'' \emph{Advances in Mathematics of Communications}, vol.~18, no.~6, pp. 1575--1589, 2024.

\bibitem{cheng2020efficient}
K.~Cheng, V.~Guruswami, B.~Haeupler, and X.~Li, ``Efficient linear and affine codes for correcting insertions/deletions,'' \emph{SIAM Journal on Discrete Mathematics}, vol.~37, no.~2, pp. 748--778, 2023.

\bibitem{Xing}
S.~Liu and C.~Xing, ``Bounds and constructions for insertion and deletion codes,'' \emph{IEEE Transactions on Information Theory}, vol.~69, no.~2, pp. 928--940, 2023.

\bibitem{MU}
\BIBentryALTinterwordspacing
M.~Mitzenmacher and E.~Upfal, \emph{Probability and Computing: Randomized Algorithms and Probabilistic Analysis}.\hskip 1em plus 0.5em minus 0.4em\relax Cambridge University Press, 2005. [Online]. Available: \url{https://doi.org/10.1017/CBO9780511813603}
\BIBentrySTDinterwordspacing

\bibitem{Zippel79}
R.~Zippel, ``Probabilistic algorithms for sparse polynomials,'' in \emph{EUROSAM}, 1979, pp. 216--226.

\bibitem{Schwartz80}
\BIBentryALTinterwordspacing
J.~T. Schwartz, ``Fast probabilistic algorithms for verification of polynomial identities,'' \emph{J. {ACM}}, vol.~27, no.~4, pp. 701--717, 1980. [Online]. Available: \url{http://doi.acm.org/10.1145/322217.322225}
\BIBentrySTDinterwordspacing

\end{thebibliography}

\end{document}